\newtheorem{theorem}{Theorem}
\newtheorem{lemma}{Lemma}
\newtheorem{corollary}{Corollary}
\newtheorem{prop}{\textbf{Proposition}}
\newtheorem{definition}{\textbf{Definition}}
\begin{document}

\title{Multilevel Diversity Coding with Regeneration}
\author{Chao Tian and Tie Liu
\thanks{This paper was presented in part at the 2014 Annual Allerton Conference on Communications, Control, and Computing and in part at the 2015 IEEE International Symposium on Information Theory. Chao Tian is with the Department of Electrical Engineering and Computer Science, The University of Tennessee, Knoxville, TN 37996, USA (email: chao.tian@utk.edu). C.~Tian's work was supported in part by the National Science Foundation under Grant CCF-15-26095. Tie Liu is with the Department of Electrical and Computer Engineering, Texas A\&M University, College Station, TX 77843, USA (email: tieliu@tamu.edu). T. Liu's work was supported in part by the National Science Foundation under Grants CCF-13-20237 and CCF-15-24839.}} 
\maketitle

\begin{abstract}
Digital contents in large-scale distributed storage systems may have different reliability and access delay requirements, and erasure codes with different strengths can provide the best storage efficiency in these systems. At the same time, in such large-scale distributed storage systems, nodes fail on a regular basis, and the contents stored on them need to be regenerated from the data downloaded from the remaining nodes. The efficiency of this repair process is an important factor that affects the overall quality of service. In this work, we formulate the problem of multilevel diversity coding with regeneration to address these considerations, for which the storage vs. repair-bandwidth tradeoff is investigated. We show that the extreme point on the optimal tradeoff curve that corresponds to the minimum possible storage can be achieved by a simple coding scheme, in which contents with different reliability requirements are encoded separately with individual regenerating codes without any mixing. On the other hand, we establish the complete storage-repair-bandwidth tradeoff for the case of four storage nodes, which reveals that codes mixing different contents can, in general, strictly improve the optimal tradeoff over the separate-coding solution.     
\end{abstract}

\smallskip
\noindent \textbf{Keywords:} Data storage, multilevel diversity coding, regenerating codes.
\section{Introduction}

The importance of big-data analytics has been widely recognized in recent years. However, efficient large-scale distributed data storage systems have to be designed and implemented in order to support the complete pipeline of data collection, processing and archival on a scale that has never been put into practice before. Advanced coding techniques have been shown to be helpful in terms of providing both performance improvement and cost reduction in such systems.

Digital contents in large-scale distributed storage systems usually have different reliability requirements. For example, although it is important to protect recent customer billing records with a very reliable code, it may be acceptable to allow the data loss probability of a five-year-old office document backup to be higher by using a weaker code. Moreover, erasure codes can also be used to reduce data access queuing delays; see \cite{Walsh:09, Huang:12:ISIT, Shah:14} and references therein. Thus, different levels of latency can also be integrated into the same data storage system by adopting different coding parameters for different contents. Such flexibility can significantly reduce the cost of hardware infrastructure, and there is a tremendous amount of interest recently in both industry and academia to design efficient software-defined storage (SDS) systems utilizing flexible erasure codes. The theoretical framework of symmetrical multilevel diversity (MLD) coding \cite{RocheYeungHau:97,YeungZhang:99} is a natural fit for this scenario, where a total of $k_0$ independent messages $(M_1,M_2,...,M_{k_0})$ are to be stored in $n \geq k_0$ storage nodes situated in different network locations, each with $\alpha$ units of data. The messages are coded in such a way that by accessing any $k\leq k_0$ of these nodes, the first $k$ messages $(M_1,M_2,...,M_k)$ can be completely recovered. 

Disk or node failures occur regularly in a large-scale data storage system, and the overall quality of service is heavily affected by the efficiency of the repair process. Dimakis {\em et al.} \cite{Dimakis:10} proposed the framework of regenerating codes to address the tradeoff between the storage and repair-bandwidth in $(n,k)$ erasure-code-based distributed storage systems. To repair a node, a new node replacing the failed one requests $\beta$ units of data each from any of the $d$ remaining nodes, and regenerates the $\alpha$ units of content to store on the new node; this code is referred to as an  $(n,k,d)$ regenerating code. There exists a natural tradeoff between the storage $\alpha$ and the repair bandwidth $\beta$: The point corresponding to the minimum amount of the storage is referred to as the minimum storage regenerating (MSR) point, and the other extreme corresponding to the minimum amount of repair bandwidth is referred to as the minimum repair-bandwidth regenerating (MBR) point. In \cite{Dimakis:10}, the content regenerated is allowed to be only functionally equivalent to the original content stored on the failed node, thus the name \lq\lq{}functional-repair\rq\rq{} regenerating codes. In practice, requiring the content regenerated to be exactly the same as that stored on the failed node can simplify the system design significantly, and thus recent research effort has been focusing on \lq\lq{}exact-repair\rq\rq{} regenerating codes \cite{Dimakis:11,RashmiShah:11, RashmiShah:12:1,Cadambe:11,Tamo:13,Papailiopoulos:13,Tian:JSAC13}.

In the current regenerating code framework, only a single message is allowed, and thus only a single level of reliability and access latency is offered. On the other hand, in the classical MLD coding framework, the data repair process was not considered. In this work, we consider repair-efficient codes in systems with heterogeneous reliability and latency requirements, and investigate the optimal storage vs. repair-bandwidth tradeoff. Because of the connection to the MLD coding and regenerating code problems, we refer to this problem as multilevel diversity coding with regeneration (MLD-R) in the sequel (see Fig. \ref{fig:system} for an illustration of the system). We shall restrict our attention to the case of exact-repair and, furthermore, to the case when $d=n-1$, because this is the most practically important case. Nevertheless, the proposed framework can be generalized to other relevant settings in straightforward fashion. 

An intuitive and straightforward coding strategy for MLD-R is to use an individual regenerating code for each message to satisfy the respective reliability and latency requirement ({\em i.e.,} separate coding), and thus an important question that we wish to answer first is whether it is even beneficial to consider codes that \lq\lq{}mix\rq\rq{} the messages. Without the repair consideration, it was shown in \cite{RocheYeungHau:97,YeungZhang:99} that mixing is not necessary for the (symmetrical) MLD coding problem. As we shall see shortly, for the minimum storage point on the optimal tradeoff curve where $\alpha$ is minimized (analogous to the MSR point in standard regenerating codes), the aforementioned separate-coding strategy is again sufficient. On the other hand, we show for $n=4$, by providing a novel code construction, that mixing can strictly improve upon the performance of the separate-coding solution in terms of the overall storage-repair-bandwidth tradeoff. In fact, we completely characterize the optimal tradeoff for this case by establishing its converse. It is worth noting that when $n=3$, separate coding is sufficient, thus $n=4$ is the smallest non-trivial case where the benefit of mixing manifests. 

The main difficulty for establishing the aforementioned results is in deriving the tight outer bounds. For the minimum storage point, we utilize a recursive bounding technique which may be of independent interest. The converse for the tradeoff rate region when $n=4$ is rather difficult to identify and derive analytically, and our approach is to utilize the computational method developed in \cite{Tian:JSAC13}. The proof is thus presented in tables whose rows are simple known information inequalities, and the summations of the rows give precisely the desired outer bounds. Though this does not conform to the conventional approach of using chains of information inequalities in information theory literature, we believe that these tables are, in fact, more fundamental: We can write down many different versions of chains of inequalities with the help of these tables, by taking different orders when applying these individual inequalities.       

The rest of the paper is organized as follows. A formal problem formulation and some preliminaries are given in Section \ref{sec:problemformulation}. In Section \ref{sec:main}, the main results of the paper are presented together with the relevant discussions. The proofs are given in Sections \ref{sec:msp} and \ref{sec:case4}. Section \ref{sec:conclusion} concludes the paper with a few possible future research directions. Several technical proofs are given in the Appendix. 

\begin{figure}
\centering
\includegraphics[width=17cm]{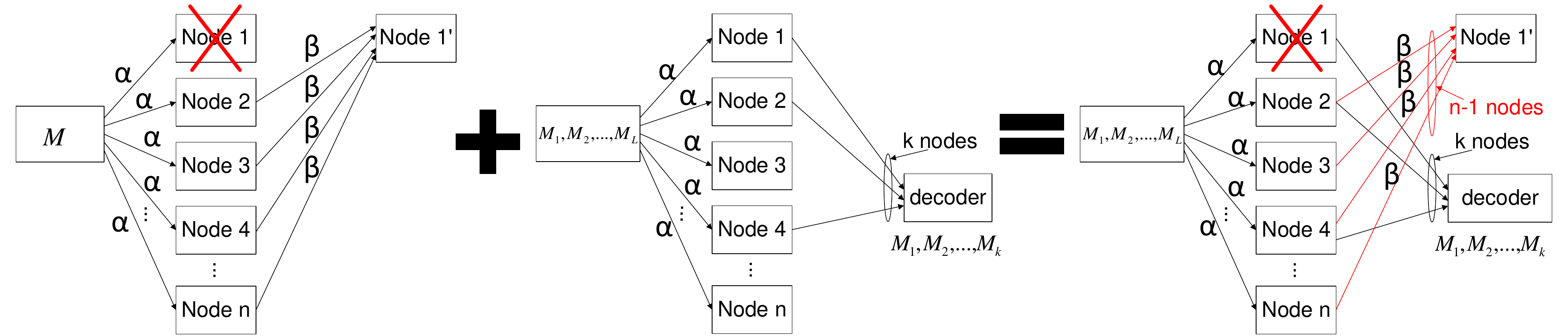}
\caption{Multilevel diversity coding with regeneration. \label{fig:system}}
\end{figure}

\section{Problem Formulation and Preliminaries}
\label{sec:problemformulation}

\subsection{Problem Formulation}

An MLD-R code is formally defined below, where $I_n$ denotes the set $\{1,2,\ldots,n\}$ and $|A|$ denotes the cardinality of a set $A$. Without loss of generality, we may assume that the number of nodes accessed during repair, {\em i.e.,} the parameter $d$, is the same as $k_0$, where $k_0$ is the number of messages. This is because if $d<k_0$, then the messages\footnote{For readers familiar with \cite{RocheYeungHau:97,YeungZhang:99}, the messages here correspond to the independent sources in \cite{RocheYeungHau:97,YeungZhang:99}. Our problem can be alternatively defined using such sources at the expense of more sophisticated notations.} $(M_{d+1},M_{d+2},...,M_{k_0})$ can be viewed as part of $M_d$ as they can all be reconstructed by accessing any $d$ nodes. On the other hand, if $d>k_0$, we can simply consider an alternative problem with $k\rq{}_0=d$ and define the messages  $M_{k_0},M_{k_0+1},...,M_{d}$ to be degenerate ({\em i.e.,} with rate zero). Recall that we shall assume $d=n-1$ for the rest of the paper.

\begin{definition}
\label{def:NKKcode}
An $(N_1,N_2,...,N_d,K_d,K)$ MLD-R code consists of $n$ encoding functions $f^E_i(\cdot)$,  $\sum_{i=1}^d{n \choose i}$ decoding functions $f^D_{A}(\cdot,...,\cdot)$, $n(n-1)$ repair-encoding functions $F^{E}_{i,j}(\cdot)$,  and $n$ repair-decoding functions $F^{D}_{j}(\cdot,...,\cdot)$, where
\begin{itemize}
\item
$f^E_i:I_{N_1}\times I_{N_2}\times...\times I_{N_d}\rightarrow I_{K_d}$ for $i\in I_n$, 
each of which maps the messages $(M_1,M_2,...,M_d)\in I_{N_1}\times I_{N_2}\times...\times I_{N_d}$ to one piece of coded information to be stored on node $i$;
\item $f^D_{A}:I_{K_d}\times I_{K_d}\times...\times I_{K_d}\rightarrow I_{N_1}\times I_{N_2}\times...\times I_{N_{|A|}}$ for $ A\subset {I}_n$ and $|A|=1,2,\ldots,d$, 
each of which maps the coded information stored on a set $A$ of nodes to the first $|A|$ messages $(M_1,M_2,...,M_{|A|})$;
\item 
$F^{E}_{i,j}:I_{K_d}\rightarrow I_{K}$ for $j\in I_n$ and $i\in I_n\setminus \{j\}$, 
each of which maps a piece of coded information at node $i$ to an index that is made available to regenerate the coded data stored at node $j$; and
\item
$F^{D}_{j}:{I}_{K}\times{I}_{K}\times...\times{I}_{K} \rightarrow {I}_{K_d}$ for $j\in{I}_n$, 
each of which maps $d$ such indices from the helper nodes $I_n\setminus\{j\}$ to regenerate the information stored at the failed node $j$.
\end{itemize}
The functions must satisfy:
\begin{itemize}
\item[1)] the data-reconstruction conditions
\begin{align}
&f_{{A}}^D\left( f^E_i(M_1,M_2,...,M_{d}),i\in{A}\right)=(M_1,M_2,...,M_{|A|}),\nonumber\\
&\qquad (M_1,M_2,...,M_d)\in{I}_{N_1}\times{I}_{N_2}\times...\times {I}_{N_d},\nonumber\\
&\qquad\qquad \qquad\qquad\qquad{A}\subset {I}_n\,\, \mbox{and}\,\, |{A}|=1,2,\ldots,d
\label{eqn:reconstructionzeroerror}
\end{align}
\item[2)] and the node-regeneration conditions
\begin{align}\label{eqn:repairrequirement}
&F^D_{j}\left(F^E_{i,j}\left(f^E_i(M_1,M_2,...,M_d)\right),i\in{I_n\setminus \{j\}}\right)=f^E_j(M_1,M_2,...,M_d),\nonumber\\
&\qquad\quad (M_1,M_2,...,M_d)\in{I}_{N_1}\times{I}_{N_2}\times...\times {I}_{N_d}\,\, \mbox{and} \,\, j\in{I}_n.
\end{align}
\end{itemize}
\end{definition}

Note that $\alpha=\log K_d$ is the storage node capacity,  $\beta=\log K$ is the per-helper-node repair bandwidth, and $B_i=\log N_i$ is the rate of the $i$-th message. The base of $\log(\cdot)$ is arbitrary, and we choose base 2 for convenience. As a concrete example, consider the case with $n=3$ nodes. Each of the three nodes has a storage capacity $\alpha$. There are two messages $M_1$ and $M_2$, the first of which needs to be reconstructed by accessing any one node, and the latter of which needs to be reconstructed by accessing any two nodes. Any single node failure needs to be repairable by using the remaining two nodes, each of which contributes $\beta$ amount of helper data. Because of the linear-scaling relation among them, we can alternatively consider the normalized version of $\alpha$, $\beta$ and $B_i$ as follows. 

\begin{definition}
A normalized storage-repair-bandwidth-message-rate tuple $(\bar{\alpha},\bar{\beta},\bar{B}_1,\bar{B}_2,...,\bar{B}_d)$ is said to be achievable with $n$ nodes where $\sum_{j=1}^d{\bar{B}_j}=1$, if there exists an $(N_1,N_2,...,N_d,K_d,K)$ MLD-R code such that
\begin{align}
&\bar{\alpha}\geq \frac{\log K_d}{\sum_{i=1}^d\log N_i},\,\,
\bar{\beta}\geq \frac{\log K}{\sum_{i=1}^d\log N_i}\,\,\mbox{and}\,\,\bar{B}_j=\frac{\log N_i}{\sum_{i=1}^d\log N_i},\qquad  j=1,2,...,d.\nonumber
\end{align}
The closure of all achievable $(\bar{\alpha},\bar{\beta},\bar{B}_1,\bar{B}_2,...,\bar{B}_d)$ tuples is the achievable normalized storage-repair-bandwidth-message-rate tradeoff region $\mathcal{R}_n$. For a fixed $(\bar{B}_1,\bar{B}_2,...,\bar{B}_d)$ tuple, the achievable normalized storage-repair-bandwidth tradeoff region is the collection of all $(\bar{\alpha},\bar{\beta})$ pairs such that $(\bar{\alpha},\bar{\beta},\bar{B}_1,\bar{B}_2,...,\bar{B}_d)\in \mathcal{R}_n$, which is denoted as $\mathcal{R}_n(\bar{B}_1,\bar{B}_2,...,\bar{B}_d)$.
\end{definition}

The codes and the tradeoff regions do not involve any particular assumption on the distribution of the messages. However, without loss of generality we may assume that the messages $M_1,M_2,\ldots, M_d$ are mutually independent and uniformly distributed, since otherwise we can perform a pre-coding to eliminate any dependency and non-uniformity. 
Sometimes it is convenient to use the accumulative sum rates instead of the individual rates, and we thus define
\begin{align}
\bar{B}^+_k\triangleq\sum_{i=1}^k\bar{B}_i,\quad k=1,2,\ldots,d.
\end{align}
Note that this definition implies $\bar{B}^+_d=1$, even though we often still write $\bar{B}^+_d$ for convenience. 

The data-reconstruction condition (\ref{eqn:reconstructionzeroerror}) requires that there is no decoding error, {\em i.e.}, the zero-error requirement is adopted. An alternative definition is to require, instead, the probability of decoding error to vanish in the limit as $\Pi_{i=1}^d N_i\rightarrow \infty$. It will become clear that this does not cause any essential difference, and we thus do not further discuss this alternative definition in this paper. 

When deriving outer bounds, we use $S_{i\rightarrow j}$ to denote the random variable representing the helper data sent from node $i$ to node $j$ during the repair of node $j$, {\em i.e.}, the output of the function $f^E_{i,j}$, and $W_i$ to denote the random variable representing the coded data stored on node $i$, {\em i.e.}, the output of the function $f^E_i$. The random vector $(X_1,X_2,\ldots,X_m)$ is sometimes written as $X_1^m$ for notational simplicity.

\subsection{Separate Coding}

One straightforward coding strategy is to encode each individual message separately using a regenerating code of the necessary parameters. More precisely, suppose that each message $M_k$ is encoded using an $(n,k,d=n-1)$ regenerating code ({\em i.e.}, any $k$ nodes can recover the message $M_k$, and any new node obtains data from any $d$ nodes for repair) of rate $(\alpha_k,\beta_k)$. Then, the resulting code has storage and repair rates given by
\begin{align*}
\alpha = \sum_{k=1}^d \alpha_k\quad \mbox{and}  \quad \beta=\sum_{k=1}^d \beta_k
\end{align*}
respectively. Therefore, if we assume that the normalized rate pair $(\bar{\alpha}_k,\bar{\beta}_k)$ is achievable by an $(n,k,d=n-1)$ regenerating code, the normalized rate pair
\begin{align}
(\bar{\alpha},\bar{\beta})=\left(\sum_{k=1}^{d}\bar{\alpha}_k\bar{B}_k,\sum_{k=1}^{d}\bar{\beta}_k\bar{B}_k\right)\label{eq:SC}
\end{align}
is achievable by separate encoding. The collection of all normalized rate pairs \eqref{eq:SC}, over all achievable normalized rate pairs $(\bar{\alpha}_k,\bar{\beta}_k)$ for any individual $(n,k,d=n-1)$ regenerating code, is the separate-coding normalized tradeoff region and is denoted as $\hat{\mathcal{R}}_{n}(\bar{B}_1,\bar{B}_2,\ldots,\bar{B}_d)$.

In order to characterize the separate-coding normalized tradeoff region $\hat{\mathcal{R}}_n(\bar{B}_1,\bar{B}_2,...,\bar{B}_d)$, normalized tradeoff region characterizations of individual regenerating codes are needed. For example, for the case of $n=3$, normalized tradeoff region characterizations for $(3,1,2)$ and $(3,2,2)$ regenerating codes are needed. However, such characterizations for general parameters are still unknown, except for the case of $k=1,2$ (for an arbitrary $d$ and $n$), and the special case $(n,k,d)=(4,3,3)$ recently established in \cite{Tian:JSAC13}. Fortunately, using these existing results, we can provide precise characterizations of the separate-coding normalized tradeoff regions for MLD-R when $n=3,4$.

\begin{lemma}
\label{prop:case3}
The separate-coding normalized tradeoff region $\hat{\mathcal{R}}_{3}(\bar{B}_1,\bar{B}_2)$ is the set of $(\bar{\alpha},\bar{\beta})$ pairs satisfying the following conditions:
\begin{align}
\bar{\alpha}\geq \bar{B}_1+\frac{\bar{B}_2}{2},\quad
\bar{\alpha}+\bar{\beta}\geq \frac{3\bar{B}_1}{2}+\bar{B}_2,\quad \mbox{and} \quad
\bar{\beta}\geq \frac{\bar{B}_1}{2}+\frac{\bar{B}_2}{3}.\label{eq:case3}
\end{align}  
\end{lemma}

\begin{lemma}
\label{prop:case4}
The separate-coding normalized tradeoff region $\hat{\mathcal{R}}_{4}(\bar{B}_1,\bar{B}_2,\bar{B}_3)$ is the set of $(\bar{\alpha},\bar{\beta})$ pairs satisfying the following conditions:
\begin{align}
\bar{\alpha} &\geq \bar{B}_1+\frac{\bar{B}_2}{2}+\frac{\bar{B}_3}{3},\label{eqn:case4_1}\\
2\bar{\alpha}+\bar{\beta} &\geq \frac{7\bar{B}_1}{3}+\frac{5\bar{B}_2}{4}+\bar{B}_3,\label{eqn:case4_2}\\
4\bar{\alpha}+6\bar{\beta} &\geq 6\bar{B}_1+\frac{7\bar{B}_2}{2}+3\bar{B}_3,\label{eqn:case4_3}\\
\bar{\alpha}+2\bar{\beta} &\geq \frac{5\bar{B}_1}{3}+\bar{B}_2+\frac{5}{6}\bar{B}_3,\label{eqn:case4_4}\\
\mbox{and} \quad \bar{\beta} &\geq \frac{\bar{B}_1}{3}+\frac{\bar{B}_2}{5}+\frac{\bar{B}_3}{6}.\label{eqn:case4_5}
\end{align}  
\end{lemma}

The proofs of these lemmas are given in the Appendix.

\section{Main Results}
\label{sec:main}

Our first main result is a precise characterization of the extreme point in the tradeoff rate region $\mathcal{R}_n(\bar{B}_1,\bar{B}_2,...,\bar{B}_d)$ where $\bar{\alpha}$ is minimized.

\begin{theorem}\label{theorem:msp}
For any $(\bar{\alpha},\bar{\beta})\in\mathcal{R}_n(\bar{B}_1,\bar{B}_2,...,\bar{B}_d)$, we have
\begin{align}
(n-2)\bar{\alpha}+\bar{\beta}\geq \sum_{k=1}^{n-1}\frac{(n-2)(n-k)+1}{k(n-k)}\bar{B}_k. \label{eqn:bound1}
\end{align}
Moreover, the minimum storage point of $(\bar{\alpha},\bar{\beta})\in\mathcal{R}_n(\bar{B}_1,\bar{B}_2,...,\bar{B}_d)$ is given as 
\begin{align}
(\bar{\alpha},\bar{\beta})=\left(\sum_{k=1}^{n-1}\frac{\bar{B}_k}{k},\sum_{k=1}^{n-1}\frac{\bar{B}_k}{k(n-k)}\right)\label{eqn:MSRalphabeta}
\end{align}
which can be achieved by separately coding of each message $M_k$ with an $(n,k,d=n-1)$ exact-repair MSR code.
\end{theorem}

The outer bound (\ref{eqn:bound1}) is proved in the Appendix. To see the second part of the theorem, recall that for MLD coding without the repair consideration, the minimum normalized storage rate is given in \cite{RocheYeungHau:97} as 
\begin{align}
\bar{\alpha}=\sum_{k=1}^{n-1}\frac{\bar{B}_k}{k}. \label{eqn:miniratealpha}
\end{align}
Note that this minimum normalized storage rate is also achievable with the additional repair consideration, as the entire collection of the messages $(M_1,M_2,\ldots,M_d)$ can be recovered by downloading the data stored at any $d$ storage nodes. Plugging the minimum normalized storage rate \eqref{eqn:miniratealpha} into (\ref{eqn:bound1}) gives:
\begin{align}
\bar{\beta} \geq \sum_{k=1}^{n-1}\frac{\bar{B}_k}{k(n-k)}.
\end{align}
On the other hand, for each $(n,k,d=n-1)$ exact-repair MSR code, the following MSR point is known to be achievable \cite{Cadambe:11}:
\begin{align}
(\bar{\alpha}_k,\bar{\beta}_k)=\left(\frac{1}{k},\frac{1}{k(n-k)}\right).
\end{align}
Thus the minimum storage point \eqref{eqn:MSRalphabeta} can be achieved by the aforementioned separate-coding strategy, when we let each individual exact-repair regenerating code operate at their respective MSR points.

Our next two results provide complete characterizations for $\mathcal{R}_{3}(\bar{B}_1,\bar{B}_2)$ and $\mathcal{R}_{4}(\bar{B}_1,\bar{B}_2,\bar{B}_3)$.
\begin{theorem}
\label{theorem:case3}
$\mathcal{R}_{3}(\bar{B}_1,\bar{B}_2)=\hat{\mathcal{R}}_{3}(\bar{B}_1,\bar{B}_2)$. 
\end{theorem}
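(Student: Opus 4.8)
The plan is to prove the two inclusions separately. The inclusion $\tilde{\mathcal{R}}_{3}(\bar B_1,\bar B_2)\subseteq\mathcal{R}_{3}(\bar B_1,\bar B_2)$ requires nothing new: separate coding is a particular MLD-R code, so every pair certified achievable in Proposition~\ref{prop:case3} lies in $\mathcal{R}_{3}(\bar B_1,\bar B_2)$. The substance of the theorem is therefore the converse $\mathcal{R}_{3}(\bar B_1,\bar B_2)\subseteq\tilde{\mathcal{R}}_{3}(\bar B_1,\bar B_2)$, i.e.\ that every achievable $(\bar\alpha,\bar\beta)$ obeys the three inequalities of Proposition~\ref{prop:case3}. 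One of these is already in hand: specializing Theorem~\ref{theorem:msp} to $n=3$ gives $(n-2)\bar\alpha+\bar\beta=\bar\alpha+\bar\beta\ge \frac{3}{2}\bar B_1+\bar B_2$, which is exactly the middle inequality. So it remains to establish the minimum-storage bound $\bar\alpha\ge\bar B_1+\frac12\bar B_2$ and the minimum-repair-bandwidth bound $\bar\beta\ge\frac12\bar B_1+\frac13\bar B_2$.

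For the storage bound I would argue on the stored contents $W_1,W_2,W_3$ using unnormalized rates, $H(M_i)=B_i$, $H(W_i)\le\alpha$, $H(S_{i\to j})\le\beta$. Since any one node recovers $M_1$ and any two recover $(M_1,M_2)$, we have $H(W_1,W_2)\ge H(M_1,M_2)=B_1+B_2$ (using independence of $M_1,M_2$) and $I(W_1;W_2)\ge I(M_1;W_2)=H(M_1)=B_1$, the latter because $M_1$ is simultaneously a deterministic function of $W_1$ and of $W_2$. Hence $2\alpha\ge H(W_1)+H(W_2)=H(W_1,W_2)+I(W_1;W_2)\ge 2B_1+B_2$, and dividing by $B_1+B_2$ gives $\bar\alpha\ge\bar B_1+\frac12\bar B_2$.

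The repair bound is the crux. I would start from the trivial $6\beta\ge\sum_{i\neq j}H(S_{i\to j})$ and aim to show $\sum_{i\neq j}H(S_{i\to j})\ge 3B_1+2B_2$. Two structural facts drive this: (i) for each node $j$, its two incoming helper messages determine $W_j$ and hence $M_1$, so $M_1$ is ``carried three times'' by the helpers; and (ii) after the $M_1$-content is conditioned away, the helpers behave like a $(3,2,2)$ regenerating code for $M_2$, whose repair rate is at least $B_2/3$, amounting to ``$2B_2$'' over the six messages. Making (ii) precise is clean: conditioning on $M_1$, reconstructing $(M_1,M_2)$ from nodes $\{1,2\}$ and expanding $H(W_1,W_2\mid M_1)$ through node~$1$'s repair data $(S_{2\to1},S_{3\to1})$ and the single new helper $S_{3\to2}$ needed for node~$2$ (note $S_{1\to2}$ is a function of $W_1$) gives $B_2\le H(S_{2\to1}\mid M_1)+H(S_{3\to1}\mid M_1)+H(S_{3\to2}\mid M_1)$; the symmetric expansion for $\{2,3\}$ gives $B_2\le H(S_{1\to3}\mid M_1)+H(S_{2\to3}\mid M_1)+H(S_{1\to2}\mid M_1)$, and summing yields $\sum_{i\neq j}H(S_{i\to j}\mid M_1)\ge 2B_2$. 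The remaining task — and the step I expect to be the real obstacle — is to combine this with (i) so as to recover the coefficient $\frac12$ on $\bar B_1$, rather than merely a convex combination of the two separate MBR bounds: the naive per-node inequalities $H(S_{i\to j})+H(S_{k\to j})\ge H(W_j)\ge B_1$ are by themselves too lossy here, since the portion of $M_1$ common to the two helpers serving a node must not be counted twice. I would do this via a single carefully ordered conditional-entropy expansion weaving together the three node repairs and the reconstructions — precisely the bookkeeping automated by the LP-over-entropic-inequalities technique of~\cite{Tian:JSAC13}. Once $\sum_{i\neq j}H(S_{i\to j})\ge 3B_1+2B_2$ is established, dividing by $B_1+B_2$ gives $\bar\beta\ge\frac12\bar B_1+\frac13\bar B_2$, completing the converse and hence the theorem.
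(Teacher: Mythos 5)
Your decomposition of the converse into three inequalities is right, and two of them are handled cleanly. Reading off $\bar\alpha+\bar\beta\ge\frac32\bar B_1+\bar B_2$ from Theorem~\ref{theorem:msp} at $n=3$ is a valid shortcut that the paper's appendix does not use (it re-derives this bound directly), and your storage bound via $H(W_1)+H(W_2)=H(W_1,W_2)+I(W_1;W_2)\ge(B_1+B_2)+B_1$ is a slightly different but perfectly sound argument — the paper instead conditions on $M_1$ and symmetrizes $H(W_1|M_1)$ and $H(W_2|M_1)$.

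The $\bar\beta$ bound, however, is where the proof is left genuinely incomplete, and your own statement of what you still need obscures why. Your intermediate claim (ii), $\sum_{i\ne j}H(S_{i\to j}\mid M_1)\ge 2B_2$, is correct as proved, but it is not the right quantity to combine with observation (i). Extracting $B_1$ at each node uses the \emph{joint} entropy of that node's two incoming helpers: $H(S_{i_1\to j},S_{i_2\to j})\ge B_1+H(S_{i_1\to j},S_{i_2\to j}\mid M_1)$. Summing over the three nodes gives $\sum_{i\ne j}H(S_{i\to j})\ge\sum_j H(S_{i_1\to j},S_{i_2\to j})\ge 3B_1+\sum_j H(S_{i_1\to j},S_{i_2\to j}\mid M_1)$, so what you actually need is $\sum_j H(S_{i_1\to j},S_{i_2\to j}\mid M_1)\ge 2B_2$. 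But subadditivity sends this sum of joints \emph{below} your quantity $\sum_{i\ne j}H(S_{i\to j}\mid M_1)$, so your (ii) bounds the wrong side; the easy submodular step $\sum_j H(S_{i_1\to j},S_{i_2\to j}\mid M_1)\ge H(M_2\mid M_1)$ only gives $B_2$, not $2B_2$. This is precisely the hard step, and the paper fills it with a short, fully explicit (not LP-generated) chain: it fixes one node, writes $H(S_{1\to3},S_{2\to3}\mid M_1)=H(S_{1\to3},S_{2\to3},W_3,S_{3\to1},S_{3\to2}\mid M_1)$, symmetrizes this single term as an average of its three index-permuted versions, and applies submodularity twice — once to merge two of the three into a full-helper-set term that recovers $M_2$, and once more to merge the two leftover pieces and recover $M_2$ again — arriving at $H(S_{1\to3},S_{2\to3},W_3\mid M_1)\ge\frac{2B_2}{3}$. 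That is exactly the per-node bound $H(S_{i_1\to j},S_{i_2\to j}\mid M_1)\ge\frac{2B_2}{3}$ you would need. So the gap is real, the bookkeeping you defer to an LP is in fact a three-line symmetric chain, and without it the $\bar\beta$ bound — the only one of the three that is not an immediate consequence of earlier results — remains unproved.
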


We obviously have $\hat{\mathcal{R}}_{3}(\bar{B}_1,\bar{B}_2)\subseteq\mathcal{R}_{3}(\bar{B}_1,\bar{B}_2)$, and the reversed inclusion $\mathcal{R}_{3}(\bar{B}_1,\bar{B}_2)\subseteq \hat{\mathcal{R}}_{3}(\bar{B}_1,\bar{B}_2)$ is proved in the Appendix. The theorem states that for the case of $n=3$, the separate-coding strategy is optimal, and there is no need to mix the messages. However, our next result shows that this is in general not the case and mixing the messages can be (strictly) beneficial.

\begin{theorem}\label{theorem:MRR4}
The normalized storage-repair-bandwidth tradeoff region  $\mathcal{R}_4(\bar{B}_1,\bar{B}_2,\bar{B}_3)$ is the collection of all $(\bar{\alpha},\bar{\beta})$ pairs that satisfy the following constraints:
\begin{align}
\bar{\alpha}  &\geq\bar{B}_1+\frac{1}{2}\bar{B}_2+\frac{1}{3}\bar{B}_3,\label{eqn:rateregion4_1}\\
2\bar{\alpha}+\bar{\beta}&\geq \frac{7}{3}\bar{B}_1+\frac{5}{4}\bar{B}_2+\bar{B}_3,\label{eqn:rateregion4_2}\\
\bar{\alpha} +\bar{\beta}&\geq \frac{4}{3}\bar{B}_1+\frac{3}{4}\bar{B}_2+\frac{5}{8}\bar{B}_3,\label{eqn:rateregion4_3}\\
2\bar{\alpha}+3\bar{\beta}&\geq 3\bar{B}_1+\frac{5}{3}\bar{B}_2+\frac{3}{2}\bar{B}_3,\label{eqn:rateregion4_4}\\
\bar{\alpha} +2\bar{\beta}&\geq \frac{5}{3}\bar{B}_1+\bar{B}_2+\frac{5}{6}\bar{B}_3,\label{eqn:rateregion4_5}\\
\mbox{and} \quad \bar{\beta}&\geq \frac{1}{3}\bar{B}_1+\frac{1}{5}\bar{B}_2+\frac{1}{6}\bar{B}_3\label{eqn:rateregion4_6}.
\end{align}
\end{theorem}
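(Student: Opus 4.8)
The plan is to establish the two inclusions separately: the achievability $\{(\bar{\alpha},\bar{\beta}):(\ref{eqn:rateregion4_1})\text{--}(\ref{eqn:rateregion4_6})\}\subseteq\mathcal{R}_4(\bar{B}_1,\bar{B}_2,\bar{B}_3)$, and the converse $\mathcal{R}_4(\bar{B}_1,\bar{B}_2,\bar{B}_3)\subseteq\{(\bar{\alpha},\bar{\beta}):(\ref{eqn:rateregion4_1})\text{--}(\ref{eqn:rateregion4_6})\}$. Since $\mathcal{R}_4$ is by definition a closed convex region and the candidate polyhedron has only the trivial recession directions $(1,0)$ and $(0,1)$, for achievability it suffices to exhibit a code (or a space-sharing combination of codes) attaining each vertex.

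For the achievability, I would first compare the candidate region with the separate-coding region of Proposition \ref{prop:case4}: the constraints (\ref{eqn:rateregion4_1}), (\ref{eqn:rateregion4_2}), (\ref{eqn:rateregion4_5}), (\ref{eqn:rateregion4_6}) coincide with (\ref{eqn:case4_1}), (\ref{eqn:case4_2}), (\ref{eqn:case4_4}), (\ref{eqn:case4_5}); (\ref{eqn:rateregion4_3}) is already implied by (\ref{eqn:case4_2}) and (\ref{eqn:case4_3}) with weights $1/4$ and $1/8$ (hence $\tilde{\mathcal{R}}_4$ satisfies it); and (\ref{eqn:rateregion4_4}) is a strict relaxation of (\ref{eqn:case4_3}) whenever $\bar{B}_2>0$. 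Consequently $\tilde{\mathcal{R}}_4\subseteq$ the candidate region, and the vertices that also lie on the separate-coding boundary are attained by separate coding of the three messages with individual $(4,k,3)$ exact-repair regenerating codes — in particular the minimum-storage vertex (the $n=4$ instance of Theorem \ref{theorem:msp} together with the MSR construction of \cite{Cadambe:11}) and the minimum-repair-bandwidth vertex (separate MBR codes). The remaining vertices are those lying on the facets (\ref{eqn:rateregion4_3}) and (\ref{eqn:rateregion4_4}), which are strictly outside $\tilde{\mathcal{R}}_4$; these I would attain with an explicit code that mixes the three messages, presented as a linear code over a small finite field with explicit encoding maps $f^E_i$ and repair maps $F^E_{i,j}$, with the per-node symbol count, per-helper symbol count, and the message dimensions $N_i$ chosen (up to integer quantization, which causes no loss in the limit) to match an arbitrary rate allocation $(\bar{B}_1,\bar{B}_2,\bar{B}_3)$. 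One then verifies directly the nested reconstruction conditions (\ref{eqn:reconstructionzeroerror}) — any $j$ nodes recover $(M_1,\ldots,M_j)$ for $j=1,2,3$ — and the exact-repair conditions (\ref{eqn:repairrequirement}). Convexity of $\mathcal{R}_4$ (space-sharing among the building-block codes) then fills in the entire polyhedron.

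For the converse, I would use the entropy-inequality method. Introduce the node contents $W_1,\ldots,W_4$ and the helper messages $S_{i\rightarrow j}$, with $M_1,M_2,M_3$ mutually independent and uniform (without loss of generality). The code definition yields the constraints $H(W_i)\le\alpha$, $H(S_{i\rightarrow j})\le\beta$, $H(S_{i\rightarrow j}\mid W_i)=0$, $H(W_j\mid\{S_{i\rightarrow j}\}_{i\ne j})=0$, and $H(M^+_{|A|}\mid \{W_i\}_{i\in A})=0$ for every $A\subseteq I_4$ with $|A|\le 3$, together with $H(M_1,M_2,M_3)=B_1+B_2+B_3$ and the associated independence relations. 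Each target inequality (\ref{eqn:rateregion4_1})--(\ref{eqn:rateregion4_6}) is then obtained as a nonnegative linear combination of (i) elemental Shannon-type inequalities (nonnegativity of conditional entropy and conditional mutual information, equivalently submodularity), (ii) the above equality constraints, and (iii) the symmetry of the problem under permutations of $\{1,2,3,4\}$. Concretely, I would set up the associated linear program exactly as in \cite{Tian:JSAC13}, reduce its dimension using the node-label symmetry, solve it, and present each bound as a table whose rows are the used inequalities with their dual-optimal weights and whose column sum is precisely the desired inequality. The bounds (\ref{eqn:rateregion4_1}) and (\ref{eqn:rateregion4_2}) fall out readily, being the classical MLD cut-set bound and the $n=4$ case of Theorem \ref{theorem:msp}; the effort concentrates on (\ref{eqn:rateregion4_3})--(\ref{eqn:rateregion4_6}).

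The main obstacle is two-fold. On the achievability side it is discovering and verifying the mixing code meeting the vertices on (\ref{eqn:rateregion4_3}) and (\ref{eqn:rateregion4_4}): $n=4$ is the smallest case where mixing strictly helps, so no separate-coding argument can reach these points, and the construction must simultaneously respect all the nested reconstruction requirements and exact repair at the tight symbol counts. On the converse side it is identifying, among the very large number of Shannon-type inequalities on $\{W_i, S_{i\rightarrow j}, M_k\}$, the particular combinations that produce the tight bounds (\ref{eqn:rateregion4_3}) and (\ref{eqn:rateregion4_4}) — these do not arise from the separate-coding converse and are hard to find by hand, which is precisely why the computational/LP route is used to generate and then cleanly exhibit the certificates.
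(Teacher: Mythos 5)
Your overall architecture matches the paper's: achievability by exhibiting codes for the vertices (with separate coding handling the MSR/MBR vertices, and a new ``mixing'' code for the vertices on facets (\ref{eqn:rateregion4_3}) and (\ref{eqn:rateregion4_4})), and a converse in which (\ref{eqn:rateregion4_1}) follows from the MLD result, (\ref{eqn:rateregion4_2}) from Theorem~\ref{theorem:msp}, and (\ref{eqn:rateregion4_3})--(\ref{eqn:rateregion4_6}) from LP-generated Shannon-type certificates presented as tables. Your observation that $\tilde{\mathcal{R}}_4$ satisfies all six candidate inequalities (so the candidate polytope strictly contains the separate-coding region) is correct, and the weight check $\tfrac14\cdot(\ref{eqn:case4_2})+\tfrac18\cdot(\ref{eqn:case4_3})\Rightarrow(\ref{eqn:rateregion4_3})$ is right.

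Where you diverge materially is in how you propose to achieve the new vertices. You envision a single explicit linear code whose per-node and per-helper symbol counts and message dimensions can be tuned ``to match an arbitrary rate allocation $(\bar{B}_1,\bar{B}_2,\bar{B}_3)$'' up to integer quantization. The paper does something more economical: it gives one concrete scalar code over $\mathrm{GF}(2^4)$ (Proposition~\ref{theorem:case4outside}) for the single operating point $(\bar{\alpha},\bar{\beta})=(\tfrac49,\tfrac29)$ at $(\bar{B}_1,\bar{B}_2,\bar{B}_3)=(0,\tfrac13,\tfrac23)$, then observes that each ambiguous vertex (cases (a)--(d) in Table~\ref{tab:MRR4}) is an extremal point of $\mathcal{R}_4$ only when the message rates lie in a certain polytope, reduces by convexity to the two boundary cases of that polytope (e.g., $\bar{B}_2=0$ and $2\bar{B}_2=\bar{B}_3$ for point (a)), and covers each boundary case by superposing a $(4,1,3)$ code with either a $(4,3,3)$ code or the new mixing code. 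Without that reduction, a single parametric-family construction is substantially harder to pin down, and your proposal does not say how the nested reconstruction and exact-repair requirements would be met simultaneously at all those tuned parameters.

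The genuine gap, then, is not in the strategy but in the deliverables: the entire technical substance of the theorem---the explicit code table for the mixing construction with its verified decoding and repair properties, the case analysis over the $(\bar{B}_1,\bar{B}_2,\bar{B}_3)$ simplex identifying which intersection points can actually be extremal, and the four explicit Shannon-type-inequality certificates for (\ref{eqn:rateregion4_3})--(\ref{eqn:rateregion4_6})---is described as ``I would do'' rather than done. As written, this is a correct outline of the proof rather than a proof. To close it, you would need to (i) exhibit a mixing code achieving a point strictly outside $\tilde{\mathcal{R}}_4$ and verify all of (\ref{eqn:reconstructionzeroerror}) and (\ref{eqn:repairrequirement}); (ii) perform the vertex-by-vertex validity analysis, noticing that each candidate vertex is extremal only on a sub-simplex of $(\bar{B}_1,\bar{B}_2,\bar{B}_3)$ and that convexity lets you check only the corners of that sub-simplex; and (iii) actually produce the dual-optimal LP certificates for the four nontrivial converse bounds.
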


The proof of the theorem is provided in Section~\ref{sec:case4}. The regions $\mathcal{R}_4\left(0,\frac{1}{3},\frac{2}{3}\right)$ and $\hat{\mathcal{R}}_4\left(0,\frac{1}{3},\frac{2}{3}\right)$ are depicted in Fig. \ref{fig:rateregion}. It can be seen that the inclusion $\hat{\mathcal{R}}_4\left(0,\frac{1}{3},\frac{2}{3}\right)\subseteq\mathcal{R}_4\left(0,\frac{1}{3},\frac{2}{3}\right)$ is strict, thus in general mixing of contents in MLD-R can be beneficial. We formally state this fact next. 

\begin{corollary}
\label{coro:different}
$\hat{\mathcal{R}}_4(\bar{B}_1,\bar{B}_2,\bar{B}_3)\subsetneq\mathcal{R}_4(\bar{B}_1,\bar{B}_2,\bar{B}_3)$ if and only if $\bar{B}_2\bar{B}_3>0$.
\end{corollary}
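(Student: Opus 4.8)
The plan is to establish the corollary as an immediate consequence of the two explicit region characterizations we already have in hand: Proposition~\ref{prop:case4}, which describes $\tilde{\mathcal{R}}_4(\bar{B}_1,\bar{B}_2,\bar{B}_3)$, and Theorem~\ref{theorem:MRR4}, which describes $\mathcal{R}_4(\bar{B}_1,\bar{B}_2,\bar{B}_3)$. First I would verify the inclusion $\tilde{\mathcal{R}}_4\subseteq\mathcal{R}_4$ in the nontrivial direction — but in fact this direction needs no work at all, since separate coding is itself a valid MLD-R coding strategy, so every $(\bar\alpha,\bar\beta)$ achievable by separate coding lies in $\mathcal{R}_4$; equivalently, one checks directly that each of the inequalities \eqref{eqn:rateregion4_1}--\eqref{eqn:rateregion4_6} defining $\mathcal{R}_4$ is implied by the inequalities \eqref{eqn:case4_1}--\eqref{eqn:case4_5} defining $\tilde{\mathcal{R}}_4$. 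The key observation is that \eqref{eqn:rateregion4_1}, \eqref{eqn:rateregion4_2}, \eqref{eqn:rateregion4_5}, \eqref{eqn:rateregion4_6} are literally identical to \eqref{eqn:case4_1}, \eqref{eqn:case4_2}, \eqref{eqn:case4_4}, \eqref{eqn:case4_5} respectively, while the remaining constraints \eqref{eqn:rateregion4_3} and \eqref{eqn:rateregion4_4} of $\mathcal{R}_4$ must be shown to be nonnegative combinations of \eqref{eqn:case4_1}--\eqref{eqn:case4_5}; e.g.\ one expects \eqref{eqn:rateregion4_4} to follow from a suitable average of \eqref{eqn:case4_3} and \eqref{eqn:case4_4}, and similarly for \eqref{eqn:rateregion4_3}.

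Second, and this is where the strictness is demonstrated, I would exhibit a single point $(\bar{B}_1,\bar{B}_2,\bar{B}_3)$ and a single pair $(\bar\alpha,\bar\beta)$ that lies in $\mathcal{R}_4$ but violates at least one defining inequality of $\tilde{\mathcal{R}}_4$. The natural candidate, already singled out in the discussion preceding the corollary, is $(\bar{B}_1,\bar{B}_2,\bar{B}_3)=(0,\tfrac13,\tfrac23)$. One substitutes these values into both sets of inequalities to obtain two explicit two-dimensional polygons in the $(\bar\alpha,\bar\beta)$-plane, and then identifies a vertex (or an edge) of $\mathcal{R}_4(0,\tfrac13,\tfrac23)$ that is strictly outside $\tilde{\mathcal{R}}_4(0,\tfrac13,\tfrac23)$. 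Concretely, constraint \eqref{eqn:case4_3} of the separate-coding region becomes $4\bar\alpha+6\bar\beta\geq \tfrac76\cdot\tfrac13\cdot 3 + \ldots$ wait — I would simply compute: \eqref{eqn:case4_3} gives $4\bar\alpha+6\bar\beta\geq \tfrac{7}{2}\bar{B}_2+3\bar{B}_3 = \tfrac{7}{6}+2 = \tfrac{19}{6}$, whereas \eqref{eqn:rateregion4_4} gives $2\bar\alpha+3\bar\beta\geq \tfrac53\bar{B}_2+\tfrac32\bar{B}_3 = \tfrac{5}{9}+1 = \tfrac{14}{9}$, i.e.\ $4\bar\alpha+6\bar\beta\geq\tfrac{28}{9}<\tfrac{19}{6}$; since $\tfrac{28}{9}=\tfrac{56}{18}<\tfrac{57}{18}=\tfrac{19}{6}$, the region $\mathcal{R}_4$ permits strictly smaller $4\bar\alpha+6\bar\beta$ than $\tilde{\mathcal{R}}_4$ does, so any $(\bar\alpha,\bar\beta)$ meeting the $\mathcal{R}_4$ bound with equality there (and satisfying the other $\mathcal{R}_4$ constraints, which one checks) lies in $\mathcal{R}_4\setminus\tilde{\mathcal{R}}_4$.

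There is one subtlety to address for cleanliness: the corollary as stated asserts strict inclusion for a generic $(\bar{B}_1,\bar{B}_2,\bar{B}_3)$, whereas the argument above only produces it for the specific choice $(0,\tfrac13,\tfrac23)$. I would resolve this either by stating the corollary as holding for \emph{some} rate split (which is all Fig.~\ref{fig:rateregion} supports and all that is needed to conclude "mixing can be beneficial"), or, if a generic statement is wanted, by noting that the gap between $2\bar\alpha+3\bar\beta$ in \eqref{eqn:rateregion4_4} and the corresponding separate-coding bound obtained by combining \eqref{eqn:case4_3}, \eqref{eqn:case4_4} is a positive linear form in $(\bar{B}_2,\bar{B}_3)$ (and in $\bar{B}_1$ as well after the analogous comparison), hence strictly positive whenever $(\bar{B}_2,\bar{B}_3)\neq(0,0)$; a symmetric remark handles the case $\bar{B}_2=\bar{B}_3=0$ trivially since then the problem degenerates. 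The main obstacle is essentially bookkeeping: carefully checking that the candidate point satisfies \emph{all} the $\mathcal{R}_4$ inequalities simultaneously (so that it genuinely lies in $\mathcal{R}_4$) while violating \emph{at least one} $\tilde{\mathcal{R}}_4$ inequality — this requires solving the small linear programs defining the two polygons at the chosen rate split and comparing their vertices, which is routine but must be done without arithmetic slips.
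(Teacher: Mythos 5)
Your proposal is correct and follows essentially the route the paper intends: verify $\tilde{\mathcal{R}}_4\subseteq\mathcal{R}_4$ (immediate because separate coding is a valid MLD-R code), then compare the two polytopes at $(\bar{B}_1,\bar{B}_2,\bar{B}_3)=(0,\tfrac13,\tfrac23)$ and observe that they differ. The arithmetic comparing \eqref{eqn:case4_3} and \eqref{eqn:rateregion4_4} at that split is right ($\tfrac{28}{9}<\tfrac{19}{6}$), and you also correctly flag that the corollary, read literally as a statement for all $(\bar{B}_1,\bar{B}_2,\bar{B}_3)$, fails in degenerate cases such as $\bar{B}_2=\bar{B}_3=0$.

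One streamlining worth noting: you need not solve the small linear program to produce a point in $\mathcal{R}_4$ that violates a $\tilde{\mathcal{R}}_4$ constraint, because Proposition~\ref{theorem:case4outside} already hands you one. It exhibits an explicit code showing $(\bar\alpha,\bar\beta)=(\tfrac49,\tfrac29)\in\mathcal{R}_4(0,\tfrac13,\tfrac23)$, so the only remaining check is that this pair violates \eqref{eqn:case4_3}: $4\cdot\tfrac49+6\cdot\tfrac29=\tfrac{28}{9}<\tfrac{19}{6}=\tfrac72\cdot\tfrac13+3\cdot\tfrac23$. This sidesteps having to verify that a candidate vertex satisfies all six inequalities \eqref{eqn:rateregion4_1}--\eqref{eqn:rateregion4_6}, which is exactly the ``bookkeeping'' you were worried about. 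Your plan, if carried out, would still work — it just duplicates effort that Proposition~\ref{theorem:case4outside} already did.
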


Since the general forms of $\hat{\mathcal{R}}_4(\bar{B}_1,\bar{B}_2,\bar{B}_3)$ and $\mathcal{R}_4(\bar{B}_1,\bar{B}_2,\bar{B}_3)$ are different, it is expected that they are not identical except for certain degenerate cases. These degenerate cases turn out to be precisely when either $\bar{B}_2=0$ or $\bar{B}_3=0$. This corollary is proved in the Appendix.

\begin{figure}
\begin{centering}
\includegraphics[width=11cm]{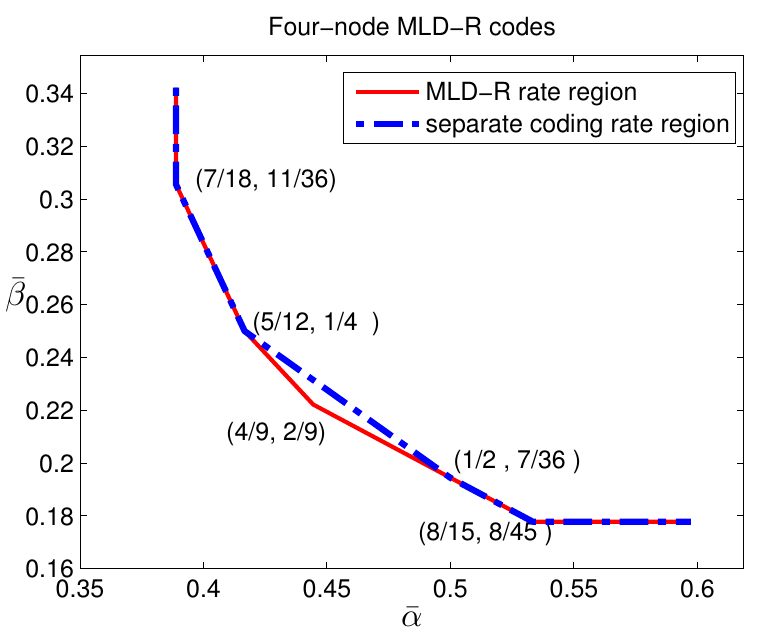}
\caption{The normalized separate coding rate region $\hat{\mathcal{R}}_4\left(0,\frac{1}{3},\frac{2}{3}\right)$ and the normalized tradeoff rate region ${\mathcal{R}}_4\left(0,\frac{1}{3},\frac{2}{3}\right)$. \label{fig:rateregion}}
\end{centering}
\end{figure}

\section{The Minimum Storage Point: Proof of Theorem \ref{theorem:msp}}
\label{sec:msp}

It can be shown that we only need to consider symmetric codes, where permutations of node indices do not 
change the induced joint entropy values. See \cite{Tian:JSAC13} for more details about this type of symmetry. Thus, without loss of generality we may restrict the proof to symmetric codes only. Before presenting the proof of Theorem \ref{theorem:msp}, we shall first present an auxiliary lemma, which will play an important role in the induction proof of Theorem \ref{theorem:msp}.  The proof of the lemma makes use of the celebrated Han's inequality \cite{Han-IC78}, partly motivated by the converse proof of (symmetrical) MLD coding problem (without the regeneration requirement) given in \cite{RocheYeungHau:97}.
\begin{lemma}\label{lemma:msp}
For any integer $\ell$ such that $1\leq \ell\leq d-1$ and any symmetric MLD-R code with a total of $n$ nodes, we have
\begin{align}
&\frac{n-1}{\ell(n-\ell)}H(W^{\ell}_1|M^{\ell}_1)+\frac{n-\ell-1}{n-\ell}H(S_{\ell+1\rightarrow1},\ldots,S_{n\rightarrow1},W^{\ell}_1|M^{\ell}_1)\nonumber\\
&\qquad\qquad\geq B_{\ell+1}+H(S_{\ell+2\rightarrow1},\ldots,S_{n\rightarrow1},W^{\ell+1}_1|M^{\ell+1}_1).
\end{align}
\end{lemma}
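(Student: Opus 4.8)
The plan is to reduce the inequality, by a chain of exact rewritings, to an elementary relation among the conditional subset entropies $\eta_m := H(W^{m}_1\mid M^{\ell}_1)$ — which depend only on $m$ because we may take the code symmetric — and then finish with one application of Han's inequality, in the form that $\tfrac{1}{m}H(W^m_1\mid M^{\ell}_1)$ is nonincreasing in $m$.

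First I would normalize the two sides. On the right, any $\ell+1$ of the $W_i$ recover $(M_1,\dots,M_{\ell+1})$ by the reconstruction conditions (\ref{eqn:reconstructionzeroerror}), so $M_{\ell+1}$ is a function of $W^{\ell+1}_1$; with the messages independent this lets me absorb the $B_{\ell+1}$ term, namely $B_{\ell+1}+H(S_{\ell+2\rightarrow1},\dots,S_{n\rightarrow1},W^{\ell+1}_1\mid M^{\ell+1}_1)=H(S_{\ell+2\rightarrow1},\dots,S_{n\rightarrow1},W^{\ell+1}_1\mid M^{\ell}_1)$. Then, using that each $S_{i\rightarrow1}$ is a function of $W_i$ and that $W_1$ is recovered from $(S_{2\rightarrow1},\dots,S_{n\rightarrow1})$ by the repair conditions (\ref{eqn:repairrequirement}), I would adjoin the ``free'' helpers $S_{i\rightarrow1}$ (those that are functions of the $W_i$ already present), delete the now-redundant $W_1$, and rewrite every entropy in terms of $T:=(S_{2\rightarrow1},\dots,S_{n\rightarrow1})$. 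The net effect is that the lemma becomes
\[
\frac{n-1}{\ell(n-\ell)}\,\eta_{\ell}+\frac{n-\ell-1}{n-\ell}\,H(T,W_2,\dots,W_\ell\mid M^{\ell}_1)\ \ge\ H(T,W_2,\dots,W_{\ell+1}\mid M^{\ell}_1),
\]
the key point being that $W_1$ has been absorbed into $T$, so that effectively only $\ell-1$ of the $W_i$ travel alongside $T$ in what follows.

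Next I would symmetrize. Permuting the node indices in $\{\ell+1,\dots,n\}$ fixes $T$ (as a set) and $W_2,\dots,W_\ell$, so $H(T,W_2,\dots,W_\ell,W_j\mid M^{\ell}_1)$ is the same for every $j\in\{\ell+1,\dots,n\}$; writing the right-hand side above as the average of these over $j$, multiplying through by $n-\ell$, and using the chain rule reduces the claim to
\[
\frac{n-1}{\ell}\,\eta_\ell\ \ge\ H(T,W_2,\dots,W_\ell\mid M^{\ell}_1)+\sum_{j=\ell+1}^{n}H\!\left(W_j\mid T,W_2,\dots,W_\ell,M^{\ell}_1\right).
\]
I would bound the right-hand side from above by two deliberately crude moves: (i) in $H(T,W_2,\dots,W_\ell\mid M^{\ell}_1)$ discard the redundant $S_{2\rightarrow1},\dots,S_{\ell\rightarrow1}$ and apply subadditivity to the remaining $n-\ell$ helpers; (ii) in each summand discard all of $T$ except $S_{j\rightarrow1}$ (a function of $W_j$) and apply the chain rule. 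The two bounds are exactly balanced: the resulting helper-entropy terms $H(S_{j\rightarrow1}\mid W_2,\dots,W_\ell,M^{\ell}_1)$ cancel one-for-one, and $\sum_{j=\ell+1}^{n}H(W_j\mid W_2,\dots,W_\ell,M^{\ell}_1)=(n-\ell)(\eta_\ell-\eta_{\ell-1})$ by symmetry, so the right-hand side collapses to $\eta_{\ell-1}+(n-\ell)(\eta_\ell-\eta_{\ell-1})$.

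Finally, the claim is now $\frac{n-1}{\ell}\eta_\ell+(n-\ell-1)\eta_{\ell-1}\ge(n-\ell)\eta_\ell$; substituting Han's inequality $\eta_{\ell-1}\ge\frac{\ell-1}{\ell}\eta_\ell$ (for $\ell\ge2$; the case $\ell=1$ is an identity) reduces it to the arithmetic fact $\frac{n-1}{\ell}+\frac{(n-\ell-1)(\ell-1)}{\ell}=n-\ell$. I expect the symmetrization and cancellation to be the real obstacle: each of the bounds (i), (ii) is individually lossy, and it is only because the coefficients $\frac{n-1}{\ell(n-\ell)}$ and $\frac{n-\ell-1}{n-\ell}$ are exactly as stated that the helper contributions cancel and the residue is precisely a Han inequality with no slack to spare. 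Getting the bookkeeping right — in particular recognizing that $W_1$ must be absorbed into $T$, so that only $\ell-1$ (not $\ell$) storage variables are carried along, and tracking which entropies symmetry equates — is where essentially all of the work lies; the rest is routine Shannon-inequality manipulation.
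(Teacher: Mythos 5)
Your argument is correct and reaches the same conclusion with the same basic ingredients (reconstruction to absorb $B_{\ell+1}$, repair to regenerate $W_1$ from $T=(S_{2\rightarrow 1},\dots,S_{n\rightarrow 1})$, the ``free helper'' observation that $S_{i\rightarrow j}$ is a function of $W_i$, subadditivity, and conditional Han's inequality), but the macro-structure is genuinely different from the paper's. The paper keeps $W_1$ explicit, applies a single submodularity step to the pair $\bigl(S_{\ell+1\rightarrow1},W^{\ell+1}_2\bigr)$ and $\bigl(S_{\ell+1\rightarrow1},\dots,S_{n\rightarrow1},W^{\ell}_1\bigr)$, and then bounds $(n-\ell)\,H(S_{\ell+1\rightarrow1},W^{\ell}_2\mid M^{\ell}_1)$ from below in a way that reproduces the original terms with smaller coefficients; the lemma is then obtained by subtracting these common terms from both sides, which the authors describe as a ``recursive'' cancellation. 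You instead first reformulate the lemma by absorbing $W_1$ into $T$ on both sides so that only $W_2,\dots,W_\ell$ travel with $T$, then symmetrize the right-hand side over $j\in\{\ell+1,\dots,n\}$ and close with a balanced pair of subadditivity bounds whose helper-entropy contributions cancel exactly, finishing with Han's inequality on $\eta_{\ell-1}\ge\frac{\ell-1}{\ell}\eta_\ell$. Your route avoids the subtract-from-both-sides step entirely and makes the role of Han's inequality more transparent; the paper's route is slightly shorter and more direct from the left-hand side. Both are correct, and the coefficient check $\frac{n-1}{\ell}+\frac{(n-\ell-1)(\ell-1)}{\ell}=n-\ell$ in your final step matches the paper's coefficient cancellation. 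One small point worth making explicit: in the ``absorb $W_1$'' step you also need to adjoin $S_{\ell+1\rightarrow1}$ (a function of $W_{\ell+1}$) when rewriting the right-hand side, so that the full $T$ is present before $W_1$ can be deleted via the repair condition; you do say this, but it is the only place where carelessness could introduce an error.
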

\begin{proof}
Let $\ell$ be an integer such that $1\leq \ell\leq d-1$. We start by writing
\begin{align}
&H(W^{\ell}_1|M^{\ell}_1)+H(S_{\ell+1\rightarrow1},\ldots,S_{n\rightarrow1},W^{\ell}_1|M^{\ell}_1)\nonumber\\
&\stackrel{(s)}{=}H(W^{\ell+1}_2|M^{\ell}_1)+H(S_{\ell+1\rightarrow1},\ldots,S_{n\rightarrow1},W^{\ell}_1|M^{\ell}_1)\nonumber\\
&\stackrel{(a)}{=}H(S_{\ell+1\rightarrow 1},W^{\ell+1}_2|M^{\ell}_1)+H(S_{\ell+1\rightarrow1},\ldots,S_{n\rightarrow1},W^{\ell}_1|M^{\ell}_1)\nonumber\\
&\stackrel{(b)}{\geq} H(S_{\ell+2\rightarrow1},\ldots,S_{n\rightarrow1},W^{\ell+1}_1|M^{\ell}_1)+H(S_{\ell+1\rightarrow 1},W^{\ell}_2|M^{\ell}_1)\nonumber\\
&\stackrel{(c)}{=} H(S_{\ell+2\rightarrow1},\ldots,S_{n\rightarrow1},W^{\ell+1}_1,M_{\ell+1}|M^{\ell}_1)+H(S_{\ell+1\rightarrow 1},W^{\ell}_2|M^{\ell}_1)\nonumber\\
&\stackrel{(d)}{=} B_{\ell+1}+H(S_{\ell+2\rightarrow1},\ldots,S_{n\rightarrow1},W^{\ell+1}_1|M^{\ell+1}_1)+H(S_{\ell+1\rightarrow 1},W^{\ell}_2|M^{\ell}_1) \label{eqn:lemma1}
\end{align}
where we write, from now on, $(s)$ to denote ``the reason of symmetry''. Here, $(a)$ is due to the fact that $S_{\ell+1\rightarrow 1}$ is a function of $W_{\ell+1}$, $(b)$ follows from the standard submodularity of the entropy function, $(c)$ is due to the fact that $M_{\ell+1}$ can be recovered from $W_{1}^{\ell+1}$, and $(d)$ follows from the assumption that $M_{\ell+1}$ is independent of $M_1,\ldots,M_\ell$  and uniform over $I_{N_{\ell+1}}$.
Further notice that 
\begin{align}
(n-\ell)H(S_{\ell+1\rightarrow 1},W^{\ell}_2|M^{\ell}_1)
&=(n-\ell)H(W^{\ell}_2|M^{\ell}_1)+(n-\ell)H(S_{\ell+1\rightarrow 1}|M^{\ell}_1,W^{\ell}_2)\nonumber\\
&\stackrel{(s)}{=}(n-\ell)H(W^{\ell}_2|M^{\ell}_1)+\sum_{i=\ell+1}^nH(S_{i\rightarrow 1}|M^{\ell}_1,W^{\ell}_2)\nonumber\\
&\geq (n-\ell)H(W^{\ell}_2|M^{\ell}_1)+H(S_{\ell+1\rightarrow 1},\ldots,S_{n\rightarrow 1}|M^{\ell}_1,W^{\ell}_2)\nonumber\\
&=(n-\ell)H(W^{\ell}_2|M^{\ell}_1)+H(S_{\ell+1\rightarrow 1},\ldots,S_{n\rightarrow 1},W_1|M^{\ell}_1,W^{\ell}_2)\nonumber\\
&=(n-\ell-1)H(W^{\ell}_2|M^{\ell}_1)+H(S_{\ell+1\rightarrow 1},\ldots,S_{n\rightarrow 1},W^{\ell}_1|M^{\ell}_1)\nonumber\\
&\geq \frac{(n-\ell-1)(\ell-1)}{\ell}H(W^{\ell}_1|M^{\ell}_1)+H(S_{\ell+1\rightarrow 1},\ldots,S_{n\rightarrow 1},W^{\ell}_1|M^{\ell}_1) \label{eqn:lemma2}
\end{align}
where in the last step we applied the conditional version of Han's inequality \cite{Han-IC78} under the symmetry assumption:
\begin{align}
\frac{1}{l-1}H(W^{\ell}_2|M^{\ell}_1) \geq \frac{1}{l}H(W^{\ell}_1|M^{\ell}_1).
\end{align}
Putting (\ref{eqn:lemma1}) and (\ref{eqn:lemma2}) together gives
\begin{align}
&H(W^{\ell}_1|M^{\ell}_1)+H(S_{\ell+1\rightarrow1},\ldots,S_{n\rightarrow1},W^{\ell}_1|M^{\ell}_1)\nonumber\\
&\geq B_{\ell+1}+H(S_{\ell+2\rightarrow1},\ldots,S_{n\rightarrow1},W^{\ell+1}_1|M^{\ell+1}_1)+\frac{(n-\ell-1)(\ell-1)}{\ell(n-\ell)}H(W^{\ell}_1|M^{\ell}_1)\nonumber\\
&\qquad\qquad+\frac{H(S_{\ell+1\rightarrow 1},\ldots,S_{n\rightarrow 1},W^{\ell}_1|M^{\ell}_1)}{n-\ell},
\end{align}
which has common terms involving $H(W^{\ell}_1|M^{\ell}_1)$ and $H(S_{\ell+1\rightarrow1},\ldots,S_{n\rightarrow1},W^{\ell}_1|M^{\ell}_1)$ on both sides that can be eliminated, and this leads to exactly the inequality stated in the lemma.
\end{proof}

In the proof of the lemma above, after several steps of derivation, the same terms in the original quantity reappear, albeit with different coefficients. These terms are subtracted on both sides of the inequality, which can be conceptually viewed as recursively applying the same chains of inequalities. We are now ready to present the proof of Theorem \ref{theorem:msp}.

\begin{proof}[Proof of Theorem \ref{theorem:msp}]
The theorem is proved through an induction, where we show that for $m=1,2,\ldots,d$,
\begin{align}
&(n-2)\alpha+\beta \geq \sum_{k=1}^{m}\frac{(n-2)(n-k)+1}{k(n-k)}{B}_k+ \left[\frac{n-2}{m}-\frac{m-1}{m(n-m)}\right]H(W^m_1|M^m_1)\nonumber\\
&\qquad\qquad\qquad\qquad\qquad+\frac{H(S_{m+1\rightarrow1},\ldots,S_{n\rightarrow1},W^m_1|M^m_1)}{n-m}.\label{eqn:claim}
\end{align}
The theorem is then simply a consequence of this statement when setting $m=d=n-1$, normalizing both sides by $\sum_{k=1}^{n-1}B_k$, and taking into account of the facts that
\begin{align}
\left[\frac{n-2}{m}-\frac{m-1}{m(n-m)}\right]=0
\end{align}
for $m=n-1$ and
\begin{align}
H(S_{m+1\rightarrow1},\ldots,S_{n\rightarrow1},W^m_1|M^m_1)\geq 0. 
\end{align}

To show that (\ref{eqn:claim}) is true for $m=1$, we write the following chain of inequalities:
\begin{align}
(n-2)\alpha+\beta
&\geq (n-2)H(W_1)+H(S_{2\rightarrow1})\nonumber\\
&= (n-2)H(W_1)+\frac{(n-1)H(S_{2\rightarrow1})}{n-1}\nonumber\\
&\geq (n-2)H(W_1)+\frac{H(S_{2\rightarrow1},S_{3\rightarrow1},\ldots,S_{n\rightarrow1})}{n-1}\nonumber\\
&= (n-2)B_1+(n-2)H(W_1|M_1)+\frac{H(S_{2\rightarrow1},S_{3\rightarrow1},\ldots,S_{n\rightarrow1},W_1)}{n-1}\nonumber\\
&=(n-2)B_1+(n-2)H(W_1|M_1)+\frac{B_1}{n-1}+\frac{H(S_{2\rightarrow1},S_{3\rightarrow1},\ldots,S_{n\rightarrow1},W_1|M_1)}{n-1}\nonumber\\
&=\frac{(n-2)(n-1)+1}{n-1}B_1+(n-2)H(W_1|M_1)+\frac{H(S_{2\rightarrow1},S_{3\rightarrow1},\ldots,S_{n\rightarrow1},W_1|M_1)}{n-1}.
\end{align}
Thus (\ref{eqn:claim}) is true for $m=1$. 

Next suppose that (\ref{eqn:claim}) is true for $m=m_0$ for some $m_0 \leq d-1$, and we wish to show that it is also true for $m=m_0+1$. Notice that by Lemma \ref{lemma:msp} with $\ell=m_0$, we have
\begin{align}
&\frac{n-1}{m_0(n-m_0)(n-m_0-1)}H(W^{m_0}_1|M^{m_0}_1)+\frac{H(S_{m_0+1\rightarrow1},\ldots,S_{n\rightarrow1},W^{m_0}_1|M^{m_0}_1)}{n-m_0}\nonumber\\
&\qquad\qquad\geq \frac{1}{n-m_0-1}B_{m_0+1}+\frac{1}{n-m_0-1}H(S_{m_0+2\rightarrow1},\ldots,S_{n\rightarrow1},W^{m_0+1}_1|M^{m_0+1}_1).\label{eq:msp}
\end{align}
It thus follows from the induction assumption that
\begin{align}
&(n-2)\alpha+\beta \geq \sum_{k=1}^{m_0}\frac{(n-2)(n-k)+1}{k(n-k)}{B}_k+\left[\frac{n-2}{m_0}-\frac{m_0-1}{m_0(n-m_0)}\right]H(W^{m_0}_1|M^{m_0}_1)\nonumber\\
&\qquad\qquad\qquad\qquad\qquad+\frac{H(S_{m_0+1\rightarrow1},\ldots,S_{n\rightarrow1},W^{m_0}_1|M^{m_0}_1)}{n-m_0}\nonumber\\
&\stackrel{(a)}{\geq} \sum_{k=1}^{m_0}\frac{(n-2)(n-k)+1}{k(n-k)}{B}_k+\left[\frac{n-2}{m_0}-\frac{m_0-1}{m_0(n-m_0)}-\frac{n-1}{m_0(n-m_0)(n-m_0-1)}\right]H(W^{m_0}_1|M^{m_0}_1)\nonumber\\
&\qquad +\frac{1}{n-m_0-1}B_{m_0+1}+\frac{1}{n-m_0-1}H(S_{m_0+2\rightarrow1},\ldots,S_{n\rightarrow1},W^{m_0+1}_1|M^{m_0+1}_1)\nonumber\\
&=\sum_{k=1}^{m_0}\frac{(n-2)(n-k)+1}{k(n-k)}{B}_k+\left[\frac{n-2}{m_0}-\frac{1}{n-m_0-1}\right]H(W^{m_0}_1|M^{m_0}_1)\nonumber\\
&\qquad +\frac{1}{n-m_0-1}B_{m_0+1}+\frac{1}{n-m_0-1}H(S_{m_0+2\rightarrow1},\ldots,S_{n\rightarrow1},W^{m_0+1}_1|M^{m_0+1}_1)\nonumber\\
&\stackrel{(b)}{\geq} \sum_{k=1}^{m_0}\frac{(n-2)(n-k)+1}{k(n-k)}{B}_k+\left[\frac{n-2}{m_0}-\frac{1}{n-m_0-1}\right]\frac{m_0}{m_0+1}H(W^{m_0+1}_1|M^{m_0}_1)\nonumber\\
&\qquad +\frac{1}{n-m_0-1}B_{m_0+1}+\frac{1}{n-m_0-1}H(S_{m_0+2\rightarrow1},\ldots,S_{n\rightarrow1},W^{m_0+1}_1|M^{m_0+1}_1)\nonumber\\
&\stackrel{(c)}{=} \sum_{k=1}^{m_0}\frac{(n-2)(n-k)+1}{k(n-k)}{B}_k+\left[\frac{n-2}{m_0+1}-\frac{m_0}{(n-m_0-1)(m_0+1)}\right]\left[B_{m_0+1}+H(W^{m_0+1}_1|M^{m_0+1}_1)\right]\nonumber\\
&\qquad +\frac{1}{n-m_0-1}B_{m_0+1}+\frac{1}{n-m_0-1}H(S_{m_0+2\rightarrow1},\ldots,S_{n\rightarrow1},W^{m_0+1}_1|M^{m_0+1}_1)\nonumber\\
&= \sum_{k=1}^{m_0}\frac{(n-2)(n-k)+1}{k(n-k)}{B}_k+\left[\frac{n-2}{m_0+1}-\frac{m_0}{(n-m_0-1)(m_0+1)}+\frac{1}{n-m_0-1}\right]B_{m_0+1}\nonumber\\
&\qquad+\left[\frac{n-2}{m_0+1}-\frac{m_0}{(n-m_0-1)(m_0+1)}\right]H(W^{m_0+1}_1|M^{m_0+1}_1)\nonumber\\
&\qquad\qquad+\frac{1}{n-m_0-1}H(S_{m_0+2\rightarrow1},\ldots,S_{n\rightarrow1},W^{m_0+1}_1|M^{m_0+1}_1)
\end{align}
which is precisely (\ref{eqn:claim}) for $m=m_0+1$ by noting that
\begin{align}
\left[\frac{n-2}{m_0+1}-\frac{m_0}{(n-m_0-1)(m_0+1)}+\frac{1}{n-m_0-1}\right]=\frac{(n-2)(n-m_0-1)+1}{(n-m_0-1)(m_0+1)}.
\end{align}
Here, $(a)$ follows from \eqref{eq:msp}, and $(b)$ follows, once again, from the conditional version of Han's inequality \cite{Han-IC78} under the symmetry assumption:
\begin{align}
\frac{1}{m_0}H(W^{m_0}_1|M^{m_0}_1) \geq \frac{1}{m_0+1}H(W^{m_0+1}_1|M^{m_0}_1)
\end{align}
and the fact that 
\begin{align*}
\left[\frac{n-2}{m_0}-\frac{1}{n-m_0-1}\right] \geq 0, \quad \forall m_0=1,2,\ldots,d-1
\end{align*}
and $(c)$ follows from the simple fact that
\begin{align*}
H(W^{m_0+1}_1|M^{m_0}_1)&=H(W^{m_0+1}_1,M_{m_0+1}|M^{m_0}_1)\\
&=H(M_{m_0+1}|M^{m_0}_1)+H(W^{m_0+1}_1|M^{m_0+1}_1)\\
&=B_{m_0+1}+H(W^{m_0+1}_1|M^{m_0+1}_1).
\end{align*}
This completes the proof of Theorem \ref{theorem:msp}.
\end{proof}

Readers familiar with the converse proof for the (symmetrical) MLD coding problem may recognize certain similarities between the above proof and that in \cite{RocheYeungHau:97}, particularly the use of Han's inequality. The key difference is that in the converse proof in \cite{RocheYeungHau:97}, one only needs to \lq\lq{}peel\rq\rq{} off the message rates by combining information in $W_i$'s sequentially. Here, however, the regeneration requirement necessitates a much more elaborate peeling process.

\section{The Tradeoff Rate Region $\mathcal{R}_4(\bar{B}_1,\bar{B}_2,\bar{B}_3)$: Proof of Theorem \ref{theorem:MRR4}}	
					
In this section the proof of Theorem \ref{theorem:MRR4} is presented. We start by providing a new code construction that achieves a particular normalized rate point, which will play a crucial role in the proof of Theorem \ref{theorem:MRR4}. 
					
\label{sec:case4}
\subsection{A New Code}
We first prove the following proposition. 

\begin{prop}
\label{theorem:case4outside}
The normalized rate pair $\left(\frac{4}{9},\frac{2}{9}\right)\in\mathcal{R}_4\left(0,\frac{1}{3},\frac{2}{3}\right)$.
\end{prop}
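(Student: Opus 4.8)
The plan is to prove this by exhibiting an explicit finite–field linear MLD‑R code that realizes the stated operating point, and then checking its defining properties directly. By the scaling invariance built into the normalized definition it suffices to produce an $(N_1,N_2,N_3,K_3,K)$ code over some finite field $\mathbb{F}_q$ which, after taking logarithms to base $q$, has $B_1=0$, $B_2=3$, $B_3=6$, $\alpha=4$, $\beta=2$: four nodes, each storing a length‑$4$ vector over $\mathbb{F}_q$, such that any two nodes determine the $3$‑symbol message $M_2$, any three nodes determine $(M_2,M_3)$ (nine symbols total), and each node is repairable from the $d=3$ helpers, each supplying $2$ symbols. Concretely, I would fix the message vector $\mathbf{x}=(M_2,M_3)\in\mathbb{F}_q^{9}$, write node $i$'s stored content as $W_i=\mathbf{x}G_i$ for an explicit $9\times 4$ generator matrix $G_i$, and the helper message as $S_{i\to j}=W_iH_{i\to j}$ for an explicit $4\times 2$ matrix $H_{i\to j}$.

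With the matrices written down, the three requirements of Definition~\ref{def:NKKcode} reduce to finitely many rank identities over $\mathbb{F}_q$: (i) for each of the $\binom{4}{2}=6$ pairs $\{i,j\}$, the three $M_2$‑coordinate vectors of $\mathbb{F}_q^{9}$ lie in the column span of $[\,G_i \mid G_j\,]$ (equivalently, the $M_2$‑rows of $[\,G_i\mid G_j\,]$ have full rank $3$ when restricted to the kernel of its $M_3$‑rows); (ii) for each of the $\binom{4}{3}=4$ triples, $[\,G_{i_1}\mid G_{i_2}\mid G_{i_3}\,]$ has rank $9$; and (iii) for each $j$, $\mathrm{colspan}(G_j)\subseteq\sum_{i\neq j}\mathrm{colspan}(G_iH_{i\to j})$. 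Each of these is a determinant/rank computation over $\mathbb{F}_q$ that can be discharged by hand or by machine; I would also note that the zero‑error reconstruction here coincides with the asymptotic version, as remarked after Definition~\ref{def:NKKcode}.

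The creative heart — and the step I expect to be the real obstacle — is the design of the $G_i$'s and $H_{i\to j}$'s, since the target is a genuinely interior point (neither minimum storage nor minimum bandwidth) of a code that must \emph{mix} $M_2$ and $M_3$. Indeed, a code in which each node's four symbols split cleanly into an ``$M_2$‑part'' and an ``$M_3$‑part'' is exactly separate coding, and Proposition~\ref{prop:case4} then forces $\bar\beta\ge\frac{25}{108}>\frac{2}{9}$ at $\bar\alpha=\frac{4}{9}$ (this follows from \eqref{eqn:case4_3} with $(\bar B_1,\bar B_2,\bar B_3)=(0,\tfrac13,\tfrac23)$), so the $2$‑symbol‑per‑helper repair budget must be spent jointly across the two messages. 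A concrete manifestation of the required mixing is the following tension, which I would use to guide the search: by (i) the $M_3$‑rows of any \emph{two} generator matrices can span at most a $5$‑dimensional space (not the generic $6$), leaving a $3$‑dimensional ``room'' in which the two nodes can still resolve the $3$‑symbol message $M_2$; yet by (ii) the $M_3$‑rows of any \emph{three} generators must span all $6$ dimensions of the $M_3$ space; and the $2$‑dimensional repair subspaces $\mathrm{colspan}(G_iH_{i\to j})$ must cover the $4$‑dimensional $\mathrm{colspan}(G_j)$. I would search for matrices meeting all of these constraints over a small field, possibly with a node‑symmetric structure in the spirit of the symmetry reduction of \cite{Tian:JSAC13}, and then record the resulting matrices and verify (i)–(iii).
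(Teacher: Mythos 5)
Your framing is right and largely parallels the paper's: the goal is to exhibit an explicit linear code with $(B_1,B_2,B_3,\alpha,\beta)=(0,3,6,4,2)$ on $n=4$ nodes and verify the three defining conditions directly, and your structural observations (the kernel of any $8\times 9$ pair-of-generators matrix must live in the $M_3$ coordinates, so the $M_3$-rows of any two nodes span at most a $5$-dimensional space; any three generators must together have rank $9$; the six $2$-dimensional repair subspaces must cover each node's column span) are both correct and a sensible way to see why a nontrivial mixing code is forced. But the proposal stops exactly at the step you yourself flag as ``the real obstacle'': you never actually produce the matrices $G_i$ and $H_{i\to j}$, and say only that you would search for them and record the result. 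For an achievability claim proved by construction, that is the entire content of the proof; an existence proof is not obtained by describing the constraint system the witness must satisfy. As written, the proposal is a plan, not a proof.

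For contrast, the paper's witness has a structure you might have found useful as a search ansatz: it first encodes $M_2=(x_1,x_2,x_3)$ with a $(10,3)$ MDS code into $(z_1,\ldots,z_{10})$, gives node $i$ the pure symbol $z_i$, and then stores the six $M_3$-symbols $y_1,\ldots,y_6$ across the nodes, three per node, where in each column exactly one $y$-symbol is masked by a distinct $z$-symbol. Any pair of nodes then sees two pure $z$'s plus one masked pair $(z_\ell+y_j, y_j)$ that unmasks a third $z$, so $M_2$ (and hence all ten $z$'s) is recoverable from any two nodes; with $M_2$ known, all masks can be stripped, and the $y$-placement is a repair-by-transfer pattern in the spirit of \cite{RashmiShah:12:1} so that any three nodes reveal all of $M_3$ and node repair costs exactly two symbols per helper. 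This concrete trick --- riding the $M_3$ repair structure and paying for $M_2$ by hiding extra MDS parities inside it --- is what your proposal would need to supply before it could be accepted.

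One smaller point: your side remark that separate coding at $\bar\alpha=\tfrac49$ forces $\bar\beta\ge\tfrac{25}{108}$ follows from \eqref{eqn:case4_3}, which indeed gives $\bar\beta\ge\frac{25}{108}>\frac{2}{9}=\frac{24}{108}$, so that sanity check is correct and confirms the point must lie strictly outside $\tilde{\mathcal{R}}_4(0,\tfrac13,\tfrac23)$.
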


\begin{proof}
We give a novel code construction where $B_1=0$, $B_2=3$, $B_3=6$, $\alpha=4$, and $\beta=2$. For concreteness, the code symbols and algebraic operations are assumed in GF$(2^4)$. Let us denote the information symbols of message $M_2$ as $(x_1,x_2,x_3)$, and the symbols of message $M_3$ as $(y_1,y_2,...,y_6)$. 

\vspace{0.1cm}
\textbf{Encoding: } First use a $(10,3)$ MDS erasure code ({\em e.g.}, Reed-Solomon code) to encode $(x_1,x_2,x_3)$ into ten coded symbols $(z_1,z_2,...,z_{10})$, such that any three symbols can completely recover $(x_1,x_2,x_3)$. Then place  linear combinations of $(y_1,y_2,...,y_6)$ and $(z_1,z_2,...,z_{10})$ into the nodes as in Table \ref{tab:codesymbols}, where the addition $+$ is also in GF$(2^4)$.

\vspace{0.1cm}

\textbf{Decoding $M_2$ using any two nodes: } To decode $M_2$, observe that any pair of nodes has two symbols involving the same $y_j$, in the form of $z_i+y_j$ in one node and $y_j$ in the other node. For example node 2 has symbol $z_6+y_4$ and node 3 has $y_4$. This implies $z_i$ can be recovered, and together with the first symbols stored in this pair of nodes,  we have three distinct symbols in the set $\{z_1,z_2,...,z_{10}\}$. Thus by the property of the MDS code, these three symbols can be used to recover $(x_1,x_2,x_3)$ and thus the message $M_2$.

\vspace{0.1cm}

\textbf{Decoding $M_2$ and $M_3$ using any three nodes: } Recall using any two nodes we can recover the message $M_2$, and thus all the code symbols $(z_1,z_2,...,z_{10})$. This implies that when three nodes are available, we can eliminate all the $z_i$ symbols first in the linear combinations. However, it is clear that after this step all the symbols $(y_1,y_2,...,y_6)$ are directly available, and thus the message $M_3$ can be decoded.

\vspace{0.1cm}
\textbf{Repair using any three nodes: } To regenerate the symbols in one node from the other three, each of the helper nodes sends the first symbol stored on the nodes as the initial step. Denote the $y$ symbols on the failed node as $(y_i,y_j,y_k)$, which may be stored in a form also involving $z$-symbols. The helper nodes each find in the symbols stored on it the one involving $y_i$, $y_j$ and $y_k$, respectively, and send these symbol combinations as the second step. The placement of the $y$-symbol guarantees that these symbols are stored on the three helper nodes respectively. Recall that from any three $z$-symbols available in the initial step, the message $M_2$ can recovered, and thus any of the $z$-symbols. This implies that $(y_i,y_j,y_k)$ can be recovered after eliminating the $z$ symbols from the received symbol combinations in the second step, and thus all the symbols on the failed node can be successfully regenerated. Each helper node contributes exactly $2$ symbols in this process. 
\end{proof}

\begin{table}
\centering
\caption{A code for $n=4$ where $(\alpha,\beta)=(4,2)$ and $(B_1,B_2,B_3)=(0,3,6)$.}
\label{tab:codesymbols}
\begin{tabular}{|c|c|c|c|c|}
\hline
         &symbol 1&symbol 2&symbol 3&symbol 4\\\hline
node 1&$z_1$&$z_5+y_1$&$y_2$&$y_3$\\\hline
node 2&$z_2$&$z_6+y_4$&$y_1$&$y_5$\\\hline
node 3&$z_3$&$z_7+y_2$&$y_6+z_{10}$&$y_4$\\\hline
node 4&$z_4$&$z_8+y_3$&$y_5+z_9$&$y_6$\\\hline
\end{tabular}
\end{table}

In the above code the linear combinations of $z$-symbols and $y$-symbols are not necessarily in GF$(2^4)$, and it can even be in the base field GF$(2)$. The only constraint on the alphabet size is through requiring the existence of an appropriate MDS erasure codes when encoding $(x_1,x_2,x_3)$, and we have chosen GF$(2^4)$ for simplicity. Readers familiar with the work \cite{RashmiShah:12:1} may recognize that if message $M_2=(x_1,x_2,x_3)$ does not exist, the remaining code involving $y$\rq{}s is a degenerate case of the repair-by-transfer code in \cite{RashmiShah:12:1}. 

As we shall see next, mixing can improve the storage-repair-bandwidth tradeoff over separate coding. However, such an improvement may come at the expense of requiring downloading more symbols than the separate-coding solution, when only a single message is required. Interestingly, in the above code, this potential drawback can be eliminated by downloading only the \lq\lq{}non-mixing\rq\rq{} symbols when all nodes are functioning.

\subsection{Forward Proof of Theorem \ref{theorem:MRR4}}

To prove the forward part of Theorem \ref{theorem:MRR4}, we examine the extreme points of the region $\mathcal{R}_4(\bar{B}_1,\bar{B}_2,\bar{B}_3)$ for fixed $(\bar{B}_1,\bar{B}_2,\bar{B}_3)$. It should be noted that for different values of $(\bar{B}_1,\bar{B}_2,\bar{B}_3)$, the extreme points may be different, which causes certain complications for the discussion. The intersecting points of any two inequalities in (\ref{eqn:rateregion4_1})-(\ref{eqn:rateregion4_6}), when taken to be equality, are listed in Table \ref{tab:MRR4} for any fixed $(\bar{B}_1,\bar{B}_2,\bar{B}_3)$. The first two columns specify which two inequalities induce the intersection.
\begin{normalsize}
\setlength{\extrarowheight}{3pt}
\begin{table}
\caption{Analysis of extreme points as intersections of any two inequalities.\label{tab:MRR4}}
\centering
\begin{tabular}{|c|c|c|c|c|}
\hline
\multicolumn{2}{|c|}{Intersection of}&$\bar{\alpha}$&$\bar{\beta}$&remark\\\hline
\multirow{5}{*}{(\ref{eqn:rateregion4_1})}&(\ref{eqn:rateregion4_2})&\multirow{5}{*}{$\bar{B}_1+\frac{\bar{B}_2}{2}+\frac{\bar{B}_3}{3}$}&$\frac{\bar{B}_1}{3}+\frac{\bar{B}_2}{4}+\frac{\bar{B}_3}{3}$&$(a)$\\\cline{2-2}\cline{4-5}
&(\ref{eqn:rateregion4_3})&&$\frac{\bar{B}_1}{3}+\frac{\bar{B}_2}{4}+\frac{7\bar{B}_3}{24}$&$(b)$\\\cline{2-2}\cline{4-4}
&(\ref{eqn:rateregion4_4})&&$\frac{\bar{B}_1}{3}+\frac{2\bar{B}_2}{9}+\frac{5\bar{B}_3}{18}$&(c)\\\cline{2-2}\cline{4-4}
&(\ref{eqn:rateregion4_5})&&$\frac{\bar{B}_1}{3}+\frac{\bar{B}_2}{4}+\frac{\bar{B}_3}{4}$&(d)\\\cline{2-2}\cline{4-4}
&(\ref{eqn:rateregion4_6})&&$\frac{\bar{B}_1}{3}+\frac{\bar{B}_2}{5}+\frac{\bar{B}_3}{6}$&(e)\\\hline
\multirow{4}{*}{(\ref{eqn:rateregion4_2})}&(\ref{eqn:rateregion4_3})&{$\bar{B}_1+\frac{\bar{B}_2}{2}+\frac{3\bar{B}_3}{8}$}&$\frac{\bar{B}_1}{3}+\frac{\bar{B}_2}{4}+\frac{\bar{B}_3}{4}$&$(f)$\\\cline{2-5}
&(\ref{eqn:rateregion4_4})&$\bar{B}_1+\frac{25\bar{B}_2}{48}+\frac{3\bar{B}_3}{8}$&$\frac{\bar{B}_1}{3}+\frac{5\bar{B}_2}{24}+\frac{\bar{B}_3}{4}$&$(g)$\\\cline{2-4}
&(\ref{eqn:rateregion4_5})&$\bar{B}_1+\frac{\bar{B}_2}{2}+\frac{7\bar{B}_3}{18}$&$\frac{\bar{B}_1}{3}+\frac{\bar{B}_2}{4}+\frac{2\bar{B}_3}{9}$&$(h)$\\\cline{2-4}
&(\ref{eqn:rateregion4_6})&$\bar{B}_1+\frac{21\bar{B}_2}{40}+\frac{5\bar{B}_3}{12}$&$\frac{\bar{B}_1}{3}+\frac{\bar{B}_2}{5}+\frac{\bar{B}_3}{6}$&$(i)$\\\hline
\multirow{3}{*}{(\ref{eqn:rateregion4_3})}&(\ref{eqn:rateregion4_4})&{$\bar{B}_1+\frac{7\bar{B}_2}{12}+\frac{3\bar{B}_3}{8}$}&$\frac{\bar{B}_1}{3}+\frac{\bar{B}_2}{6}+\frac{\bar{B}_3}{4}$&$(j)$\\\cline{2-5}
&(\ref{eqn:rateregion4_5})&$\bar{B}_1+\frac{\bar{B}_2}{2}+\frac{5\bar{B}_3}{12}$&$\frac{\bar{B}_1}{3}+\frac{\bar{B}_2}{4}+\frac{5\bar{B}_3}{24}$&$(k)$\\\cline{2-5}
&(\ref{eqn:rateregion4_6})&$\bar{B}_1+\frac{11\bar{B}_2}{20}+\frac{11\bar{B}_3}{24}$&$\frac{\bar{B}_1}{3}+\frac{\bar{B}_2}{5}+\frac{\bar{B}_3}{6}$&$(l)$\\\hline
\multirow{2}{*}{(\ref{eqn:rateregion4_4})}&(\ref{eqn:rateregion4_5})&$\bar{B}_1+\frac{\bar{B}_2}{3}+\frac{\bar{B}_3}{2}$&$\frac{\bar{B}_1}{3}+\frac{\bar{B}_2}{3}+\frac{\bar{B}_3}{6}$&$(m)$\\\cline{2-5}
&(\ref{eqn:rateregion4_6})&$\bar{B}_1+\frac{8\bar{B}_2}{15}+\frac{\bar{B}_3}{2}$&$\frac{\bar{B}_1}{3}+\frac{\bar{B}_2}{5}+\frac{\bar{B}_3}{6}$&$(n)$\\\hline
(\ref{eqn:rateregion4_5})&(\ref{eqn:rateregion4_6})&\multirow{1}{*}{$\bar{B}_1+\frac{3\bar{B}_2}{5}+\frac{\bar{B}_3}{2}$}&$\frac{\bar{B}_1}{3}+\frac{\bar{B}_2}{5}+\frac{\bar{B}_3}{6}$&$(o)$\\\hline
\end{tabular}
\end{table}
\end{normalsize}

\begin{itemize}
\item Point $(a)$ can be achieved by a separate-coding scheme that uses $(4,1,3)$, $(4,2,3)$ and $(4,3,3)$ regenerating codes operating at the normalized rate pairs $(\bar{\alpha}_1,\bar{\beta}_1)=\left(1,\frac{1}{3}\right)$, $(\bar{\alpha}_2,\bar{\beta}_2)=\left(\frac{1}{2},\frac{1}{4}\right)$ and $(\bar{\alpha}_3,\bar{\beta}_3)=\left(\frac{1}{3},\frac{1}{3}\right)$, respectively.
\item By the inequality (\ref{eqn:rateregion4_2}), points $(b)$ and $(d)$ are feasible only if $\bar{B}_3=0$, and points $(c)$ and $(e)$ are feasible only if $\bar{B}_2=\bar{B}_3=0$. In both cases, these points are reduced to point $(a)$, which has been shown to be achievable.
\item Point $(f)$ can be achieved by a separate-coding scheme that uses $(4,1,3)$, $(4,2,3)$ and $(4,3,3)$ regenerating codes operating at at the normalized rate pairs $(\bar{\alpha}_1,\bar{\beta}_1)=\left(1,\frac{1}{3}\right)$, $(\bar{\alpha}_2,\bar{\beta}_2)=\left(\frac{1}{2},\frac{1}{4}\right)$ and $(\bar{\alpha}_3,\bar{\beta}_3)=\left(\frac{3}{8},\frac{1}{4}\right)$, respectively.
\item By the inequality (\ref{eqn:rateregion4_3}), point $(g)$ is feasible only if $\bar{B}_2=0$; point $(h)$ is feasible only if $\bar{B}_3=0$; and point $(i)$ is feasible only if $\bar{B}_2=\bar{B}_3=0$. In all three cases, these points are reduced to point $(f)$, which has been shown to be achievable.
\item By the inequality (\ref{eqn:rateregion4_5}), point $(j)$ is feasible only if $\bar{B}_2\leq \frac{\bar{B}_3}{2}$. Consider a message triple $(M_1,M_2,M_3)$ with sufficiently large message rates $\left(B_1,B_2,B_3\right)$, where
\begin{align*}
B_2=\frac{1-\gamma}{2}B_3, \quad \gamma \in [0,1].
\end{align*}
Split message $M_3$ into two independent sub-messages $M_{3,1}$ and $M_{3,2}$ with rates $\gamma B_3$ and $(1-\gamma)B_3=2B_2$, respectively. Consider encoding the messages $M_1$, $M_{3,1}$ and $(M_2,M_{3,2})$ separately. More specifically, encode message $M_1$ using a $(4,1,3)$ regenerating code operating at the normalized rate pair $\left(1,\frac{1}{3}\right)$; encode message $M_{3,1}$ using a $(4,3,3)$ regenerating code operating at the normalized rate pair $\left(\frac{3}{8},\frac{1}{4}\right)$; and encode the messages $(M_2,M_{3,2})$ jointly using a code as described in Proposition~\ref{theorem:case4outside} operating at the normalized rate pair $\left(\frac{4}{9},\frac{2}{9}\right)$. The total rates of this coding scheme are given by:
\begin{align*}
(\alpha,\beta) &=\left(B_1+\frac{3\gamma B_3}{8}+\frac{4(B_2+2B_2)}{9},\frac{B_1}{3}+\frac{\gamma B_3}{4}+\frac{2(B_2+2B_2)}{9}\right)\\
&= \left(B_1+\frac{7B_2}{12}+\frac{3B_3}{8},\frac{B_1}{3}+\frac{B_2}{6}+\frac{B_3}{4}\right).
\end{align*}
Normalizing both sides by $B_1+B_2+B_3$, we conclude that any normalized rate pair $(j)$ with
\begin{align*}
\bar{B}_2=\frac{1-\gamma}{2}\bar{B}_3, \quad \gamma \in [0,1]
\end{align*}
is achievable, implying that point $(j)$  is indeed achievable whenever $\bar{B}_2\leq \frac{\bar{B}_3}{2}$.
\item By the inequality (\ref{eqn:rateregion4_4}), point $(k)$ is feasible only if $\bar{B}_3 \leq 2\bar{B}_2$. Consider a message triple $(M_1,M_2,M_3)$ with sufficiently large message rates $\left(B_1,B_2,B_3\right)$, where
\begin{align*}
B_3=2(1-\gamma)B_2, \quad \gamma \in [0,1].
\end{align*}
Split message $M_2$ into two independent sub-messages $M_{2,1}$ and $M_{2,2}$ with rates $\gamma B_2$ and $(1-\gamma)B_2=\frac{B_3}{2}$, respectively. Consider encoding the messages $M_1$, $M_{2,1}$ and $(M_{2,2},M_3)$ separately. More specifically, encode message $M_1$ using a $(4,1,3)$ regenerating code operating at the normalized rate pair $\left(1,\frac{1}{3}\right)$; encode message $M_{2,1}$ using a $(4,2,3)$ regenerating code operating at the normalized rate pair $\left(\frac{1}{2},\frac{1}{4}\right)$; and encode the messages $(M_{2,2},M_3)$ jointly using a code as described in Proposition~\ref{theorem:case4outside} operating at the normalized rate pair $\left(\frac{4}{9},\frac{2}{9}\right)$. The total rates of this coding scheme are given by:
\begin{align*}
(\alpha,\beta) &=\left(B_1+\frac{\gamma B_2}{2}+\frac{4\left(\frac{B_3}{2}+B_3\right)}{9},\frac{B_1}{3}+\frac{\gamma B_2}{4}+\frac{2\left(\frac{B_3}{2}+B_3\right)}{9}\right)\\
&= \left(B_1+\frac{B_2}{2}+\frac{5B_3}{12},\frac{B_1}{3}+\frac{B_2}{4}+\frac{5B_3}{24}\right).
\end{align*}
Normalizing both sides by $B_1+B_2+B_3$, we may conclude that any normalized rate pair $(k)$ with
\begin{align*}
\bar{B}_3=2(1-\gamma)\bar{B}_2, \quad \gamma \in [0,1]
\end{align*}
is achievable, implying point $(k)$ is indeed achievable whenever $\bar{B}_3\leq 2\bar{B}_2$.
\item By the inequalities (\ref{eqn:rateregion4_4}) and (\ref{eqn:rateregion4_5}), point $(l)$ is feasible only if $\bar{B}_2=\bar{B}_3=0$.  In this case, point $(l)$ is reduced to point $(a)$, which has been shown to be achievable.
\item By the inequality (\ref{eqn:rateregion4_4}), point $(m)$ is feasible only if $\bar{B}_2\leq \frac{\bar{B}_3}{2}$. Consider a message triple $(M_1,M_2,M_3)$ with sufficiently large message rates $\left(B_1,B_2,B_3\right)$, where
\begin{align*}
B_2=\frac{1-\gamma}{2}B_3, \quad \gamma \in [0,1].
\end{align*}
Split message $M_3$ into two independent sub-messages $M_{3,1}$ and $M_{3,2}$ with rates $\gamma B_3$ and $(1-\gamma)B_3=2B_2$, respectively. Consider encoding the messages $M_1$, $M_{3,1}$ and $(M_2,M_{3,2})$ separately. More specifically, encode message $M_1$ using a $(4,1,3)$ regenerating code operating at the normalized rate pair $\left(1,\frac{1}{3}\right)$; encode message $M_{3,1}$ using a $(4,3,3)$ regenerating code operating at the normalized rate pair $\left(\frac{1}{2},\frac{1}{6}\right)$; and encode the messages $(M_2,M_{3,2})$ jointly using a code as described in Proposition~\ref{theorem:case4outside} operating at the normalized rate pair $\left(\frac{4}{9},\frac{2}{9}\right)$. The total rates of this coding scheme are given by:
\begin{align*}
(\alpha,\beta) &=\left(B_1+\frac{\gamma B_3}{2}+\frac{4(B_2+2B_2)}{9},\frac{B_1}{3}+\frac{\gamma B_3}{6}+\frac{2(B_2+2B_2)}{9}\right)\\
&= \left(B_1+\frac{B_2}{3}+\frac{B_3}{2},\frac{B_1}{3}+\frac{B_2}{3}+\frac{B_3}{6}\right).
\end{align*}
Normalizing both sides by $B_1+B_2+B_3$, we may conclude that any normalized rate pair $(m)$ with
\begin{align*}
\bar{B}_2=\frac{1-\gamma}{2}\bar{B}_3, \quad \gamma \in [0,1]
\end{align*}
is achievable, implying point $(m)$ is indeed achievable whenever $\bar{B}_2\leq \frac{\bar{B}_3}{2}$.
\item By the inequality (\ref{eqn:rateregion4_5}), point $(n)$ is feasible only if $\bar{B}_2=0$. In this case, point $(n)$ can be achieved by a separate-coding scheme that uses $(4,1,3)$ and $(4,3,3)$ regenerating codes operating at the normalized rate pairs $(\bar{\alpha}_1,\bar{\beta}_1)=\left(1,\frac{1}{3}\right)$ and $(\bar{\alpha}_3,\bar{\beta}_3)=\left(\frac{1}{2},\frac{1}{6}\right)$, respectively.
\item Finally, point $(o)$ can be achieved by a separate-coding scheme that uses $(4,1,3)$, $(4,2,3)$ and $(4,3,3)$ regenerating codes operating at the normalized rate pairs $(\bar{\alpha}_1,\bar{\beta}_1)=\left(1,\frac{1}{3}\right)$, $(\bar{\alpha}_2,\bar{\beta}_2)=\left(\frac{3}{5},\frac{1}{5}\right)$ and $(\bar{\alpha}_3,\bar{\beta}_3)=\left(\frac{1}{2},\frac{1}{6}\right)$, respectively.
\end{itemize}

The proof is now complete.\qed

\subsection{Converse Proof of Theorem \ref{theorem:MRR4}}
 
To establish the converse of Theorem \ref{theorem:MRR4}, we shall prove that every rate pair $(\bar{\alpha},\bar{\beta}) \in \mathcal{R}_4(\bar{B}_1,\bar{B}_2,\bar{B}_3)$ must satisfy the inequalities (\ref{eqn:rateregion4_1})--(\ref{eqn:rateregion4_6}).  The inequality (\ref{eqn:rateregion4_1}) holds even without the regeneration requirement\cite{RocheYeungHau:97}, and the inequality (\ref{eqn:rateregion4_2}) follows directly from Theorem \ref{theorem:msp} by setting $n=4$. It remains to show that the inequalities (\ref{eqn:rateregion4_3})--(\ref{eqn:rateregion4_6}) are true, and we shall prove each as a separate proposition. Instead of writing the proofs in the conventional fashion as chains of inequalities, we utilize the computational approach developed in \cite{Tian:JSAC13} and prove these inequalities by tabulation. Our proof of each inequality is given as two tables: The first one lists the joint entropy terms in the proof, and the second one lists the coefficients of needed inequalities. The last row, as the summation of all the other rows, is exactly the sought-after inequality. Note that each row, except for the last one, in the second table is a ``simple" Shannon-type inequality, possibly after a permutation of the indices for each entropy term. For example, the third line of Table \ref{table:MRR4_32} is
\begin{align}
2H(S_{4\rightarrow 3})+2H(S_{4\rightarrow 2},S_{3\rightarrow 2})-2H(S_{4\rightarrow 1},S_{3\rightarrow 1},S_{2\rightarrow 1})\geq 0
\end{align}
which is equivalent to the simple independence bound on entropy:
\begin{align}
H(S_{2\rightarrow 1})+H(S_{4\rightarrow 1},S_{3\rightarrow 1})-H(S_{4\rightarrow 1},S_{3\rightarrow 1},S_{2\rightarrow 1})\geq 0
\end{align}
after taking into account of the symmetry $H(S_{4\rightarrow 3})=H(S_{2\rightarrow 1})$ and $H(S_{4\rightarrow 2},S_{3\rightarrow 2})=H(S_{4\rightarrow 1},S_{3\rightarrow 1})$ in the assumed  solution set. Further note that for $n=4$ we have $\bar{B}^+_3=1$, however we still write it as $\bar{B}^+_3$ to make explicit its meaning of sum rate.

\begin{prop}
For any $(\bar{\alpha},\bar{\beta})\in \mathcal{R}_4(\bar{B}_1,\bar{B}_2,\bar{B}_3)$, we have \label{prop:MRR4_3}
\begin{align}
24(\bar{\alpha}+\bar{\beta})&\geq 14\bar{B}^+_1+3\bar{B}^+_2+15\bar{B}^+_3=24\left(\frac{4}{3}\bar{B}_1+\frac{3}{4}\bar{B}_2+\frac{5}{8}\bar{B}_3\right).
\end{align}
\end{prop}
\begin{proof}
See Tables \ref{table:MRR4_31} and \ref{table:MRR4_32}.
\end{proof}
\begin{table}
\centering
\caption{Terms needed to prove Proposition \ref{prop:MRR4_3}\label{table:MRR4_31}.}
\begin{tabular}{|c|c|}
\hline
$T_{ 1}$ & $H(S_{4\rightarrow3})$ \\
$T_{ 2}$ & $H(S_{4\rightarrow2},S_{3\rightarrow2})$ \\
$T_{ 3}$ & $H(S_{4\rightarrow1},S_{3\rightarrow1},S_{2\rightarrow1})$ \\
$T_{ 4}$ & $H(W_4)$ \\
$T_{ 5}$ & $H(S_{3\rightarrow2},W_4)$ \\
$T_{ 6}$ & $H(S_{3\rightarrow4},S_{2\rightarrow4},W_4)$ \\
$T_{ 7}$ & $H(W_4,W_3)$ \\
$T_{ 8}$ & $H(S_{2\rightarrow4},W_4,W_3)$ \\
$T_{ 9}$ & $H(S_{3\rightarrow1},S_{2\rightarrow1},W_4,W_1)$ \\
$T_{10}$ & $H(S_{4\rightarrow3},M_1)$ \\
$T_{11}$ & $H(S_{4\rightarrow2},S_{3\rightarrow2},M_1)$ \\
$T_{12}$ & $H(W_4,M_1,M_2)$ \\
$T_{13}$ & $H(S_{3\rightarrow2},W_4,M_1,M_2)$ \\
$T_{14}$ & $H(S_{3\rightarrow4},W_4,M_1,M_2)$ \\
$T_{15}$ & $H(S_{3\rightarrow2},S_{2\rightarrow4},W_4,M_1,M_2)$ \\
$T_{16}$ & $H(M_1)=B^+_1$ \\
$T_{17}$ & $H(M_1,M_2)=B^+_2$ \\
$T_{18}$ & $H(M_1,M_2,M_3)=B^+_3$ \\
\hline
\end{tabular}
\end{table}

\begin{table}
\centering
\caption{Proof by Tabulation of Proposition \ref{prop:MRR4_3}, with terms defined in Table \ref{table:MRR4_31}.\label{table:MRR4_32}}
 \centering 
        \setlength{\tabcolsep}{3pt}
        \begin{tabular}{|cccccccccccccccccc|}
\hline
$T_{ 1}$  &$T_{ 2}$  &$T_{ 3}$  &$T_{ 4}$  &$T_{ 5}$  &$T_{ 6}$  &$T_{ 7}$  &$T_{ 8}$  &$T_{ 9}$  &$T_{10}$  &$T_{11}$  &$T_{12}$  &$T_{13}$  &$T_{14}$  &$T_{15}$  &$T_{16}$  &$T_{17}$  &$T_{18}$  \\\hline
$ 16$     &$ -8$     &          &          &          &          &          &          &          &          &          &          &          &          &          &          &          &           \\
          &          &          &$  4$     &$ -4$     &          &          &          &          &$  4$     &          &          &          &          &          &$ -4$     &          &           \\
$  2$     &$  2$     &$ -2$     &          &          &          &          &          &          &          &          &          &          &          &          &          &          &           \\
          &          &          &$ 12$     &          &          &$ -6$     &          &          &          &          &          &          &          &          &$ -6$     &          &           \\
          &          &          &          &$  2$     &$  2$     &          &$ -2$     &          &          &$ -2$     &          &          &          &          &          &          &           \\
$  6$     &$  6$     &          &          &          &$ -6$     &          &          &          &          &          &          &          &          &          &          &          &           \\
          &          &          &$  4$     &          &$  4$     &          &$ -4$     &          &$ -4$     &          &          &          &          &          &          &          &           \\
          &          &          &$  4$     &          &          &          &          &$ -4$     &          &$  4$     &          &          &          &          &$ -4$     &          &           \\
          &          &          &          &          &          &          &          &          &          &          &$  3$     &          &$  3$     &          &          &$ -3$     &$ -3$     \\
          &          &          &          &          &          &$  3$     &          &          &          &          &$ -3$     &$  3$     &          &          &          &          &$ -3$     \\
          &          &          &          &          &          &$  3$     &          &$  3$     &          &          &          &$ -3$     &          &          &          &          &$ -3$     \\
          &          &          &          &          &          &          &$  3$     &          &          &          &          &          &$ -3$     &$  3$     &          &          &$ -3$     \\
          &          &          &          &          &          &          &$  3$     &$  3$     &          &          &          &          &          &$ -3$     &          &          &$ -3$     \\
          &          &$  2$     &          &$  2$     &          &          &          &$ -2$     &          &$ -2$     &          &          &          &          &          &          &           \\
\hline
\hline
$24$&&&$24$&&&&&&&&&&&&$-14$&$-3$&$-15$\\
\hline
\end{tabular}
\end{table}

\begin{prop}
For any $(\bar{\alpha},\bar{\beta})\in \mathcal{R}_4(\bar{B}_1,\bar{B}_2,\bar{B}_3)$, we have \label{prop:MRR4_4}
\begin{align}
6(2\bar{\alpha}+3\bar{\beta})&\geq 8\bar{B}^+_1+\bar{B}^+_2+9\bar{B}^+_3= 6\left(3\bar{B}_1+\frac{5}{3}\bar{B}_2+\frac{3}{2}\bar{B}_3\right).
\end{align}
\end{prop}
\begin{proof}
See Tables \ref{table:MRR4_41} and \ref{table:MRR4_42}.
\end{proof}
\begin{table}
\centering
\caption{Terms needed to prove Proposition \ref{prop:MRR4_4}\label{table:MRR4_41}.}
\begin{tabular}{|c|c|}
\hline
$T_{ 1}$ & $H(S_{4\rightarrow3})$ \\
$T_{ 2}$ & $H(S_{4\rightarrow2},S_{3\rightarrow2})$ \\
$T_{ 3}$ & $H(S_{4\rightarrow1},S_{3\rightarrow1},S_{2\rightarrow1})$ \\
$T_{ 4}$ & $H(W_4)$ \\
$T_{ 5}$ & $H(S_{3\rightarrow2},W_4)$ \\
$T_{ 6}$ & $H(S_{3\rightarrow4},W_4)$ \\
$T_{ 7}$ & $H(S_{3\rightarrow4},S_{2\rightarrow4},W_4)$ \\
$T_{ 8}$ & $H(W_4,W_3)$ \\
$T_{ 9}$ & $H(S_{2\rightarrow4},W_4,W_3)$ \\
$T_{10}$ & $H(S_{3\rightarrow1},S_{2\rightarrow1},W_4,W_1)$ \\
$T_{11}$ & $H(S_{4\rightarrow3},M_1)$ \\
$T_{12}$ & $H(S_{4\rightarrow2},S_{3\rightarrow2},M_1)$ \\
$T_{13}$ & $H(S_{4\rightarrow3},S_{3\rightarrow4},M_1)$ \\
$T_{14}$ & $H(S_{4\rightarrow3},S_{3\rightarrow4},S_{2\rightarrow4},M_1)$ \\
$T_{15}$ & $H(W_4,M_1,M_2)$ \\
$T_{16}$ & $H(S_{3\rightarrow2},W_4,M_1,M_2)$ \\
$T_{17}$ & $H(S_{3\rightarrow4},W_4,M_1,M_2)$ \\
$T_{18}$ & $H(S_{3\rightarrow2},S_{2\rightarrow4},W_4,M_1,M_2)$ \\
$T_{19}$ & $H(M_1)=B^+_1$ \\
$T_{20}$ & $H(M_1,M_2)=B^+_2$ \\
$T_{21}$ & $H(M_1,M_2,M_3)=B^+_3$ \\
\hline
\end{tabular}
\end{table}

\begin{table}
\centering
\caption{Proof by Tabulation of Proposition \ref{prop:MRR4_4}, with terms defined in Table \ref{table:MRR4_41}.\label{table:MRR4_42}}
 \centering 
        \setlength{\tabcolsep}{3pt}
        \begin{tabular}{|ccccccccccccccccccccc|}
\hline
$T_{ 1}$  &$T_{ 2}$  &$T_{ 3}$  &$T_{ 4}$  &$T_{ 5}$  &$T_{ 6}$  &$T_{ 7}$  &$T_{ 8}$  &$T_{ 9}$  &$T_{10}$  &$T_{11}$  &$T_{12}$  &$T_{13}$  &$T_{14}$  &$T_{15}$  &$T_{16}$  &$T_{17}$  &$T_{18}$  &$T_{19}$  &$T_{20}$  &$T_{21}$  \\\hline
$ 12$     &$ -6$     &          &          &          &          &          &          &          &          &          &          &          &          &          &          &          &          &          &          &           \\
          &          &          &$  4$     &$ -4$     &          &          &          &          &          &$  4$     &          &          &          &          &          &          &          &$ -4$     &          &           \\
$  4$     &$  4$     &$ -4$     &          &          &          &          &          &          &          &          &          &          &          &          &          &          &          &          &          &           \\
          &          &          &$  4$     &          &          &          &          &          &$ -4$     &          &$  4$     &          &          &          &          &          &          &$ -4$     &          &           \\
          &          &          &$  2$     &          &$  2$     &          &$ -2$     &          &          &$ -2$     &          &          &          &          &          &          &          &          &          &           \\
          &          &          &          &$  2$     &          &$  2$     &          &$ -2$     &          &          &$ -2$     &          &          &          &          &          &          &          &          &           \\
$  2$     &$  2$     &          &          &          &          &$ -2$     &          &          &          &          &          &          &          &          &          &          &          &          &          &           \\
          &          &          &$  2$     &          &$ -2$     &          &          &          &          &$ -2$     &          &$  2$     &          &          &          &          &          &          &          &           \\
          &          &$  2$     &          &$  2$     &          &          &          &          &$ -2$     &          &$ -2$     &          &          &          &          &          &          &          &          &           \\
          &          &          &          &          &          &          &          &          &$  2$     &          &          &$ -2$     &$  2$     &          &          &          &          &          &          &$ -2$     \\
          &          &$  2$     &          &          &          &          &          &          &$  2$     &          &          &          &$ -2$     &          &          &          &          &          &          &$ -2$     \\
          &          &          &          &          &          &          &          &          &          &          &          &          &          &$  1$     &          &$  1$     &          &          &$ -1$     &$ -1$     \\
          &          &          &          &          &          &          &$  1$     &          &          &          &          &          &          &$ -1$     &$  1$     &          &          &          &          &$ -1$     \\
          &          &          &          &          &          &          &$  1$     &          &$  1$     &          &          &          &          &          &$ -1$     &          &          &          &          &$ -1$     \\
          &          &          &          &          &          &          &          &$  1$     &          &          &          &          &          &          &          &$ -1$     &$  1$     &          &          &$ -1$     \\
          &          &          &          &          &          &          &          &$  1$     &$  1$     &          &          &          &          &          &          &          &$ -1$     &          &          &$ -1$     \\
\hline
\hline
$18$&&&$12$&&&&&&&&&&&&&&&$-8$&$-1$&$-9$\\
\hline
\end{tabular}
\end{table}

\begin{prop}
For any $(\bar{\alpha},\bar{\beta})\in \mathcal{R}_4(\bar{B}_1,\bar{B}_2,\bar{B}_3)$, we have \label{prop:MRR4_5}
\begin{align}
12(\bar{\alpha}+2\bar{\beta})\geq 8\bar{B}^+_1+2\bar{B}^+_2+10\bar{B}^+_3=12\left(\frac{5}{3}\bar{B}_1+\bar{B}_2+\frac{5}{6}\bar{B}_3\right).
\end{align}
\end{prop}
\begin{proof}
See Tables \ref{table:MRR4_51} and \ref{table:MRR4_52}.
\end{proof}

\begin{table}
\centering
\caption{Terms needed to prove Proposition \ref{prop:MRR4_5}\label{table:MRR4_51}.}
\begin{tabular}{|c|c|}
\hline
$T_{ 1}$ & $H(S_{4\rightarrow3})$ \\
$T_{ 2}$ & $H(S_{4\rightarrow2},S_{3\rightarrow2})$ \\
$T_{ 3}$ & $H(S_{4\rightarrow1},S_{3\rightarrow1},S_{2\rightarrow1})$ \\
$T_{ 4}$ & $H(W_4)$ \\
$T_{ 5}$ & $H(S_{3\rightarrow4},W_4)$ \\
$T_{ 6}$ & $H(S_{3\rightarrow2},W_4)$ \\
$T_{ 7}$ & $H(S_{3\rightarrow4},S_{2\rightarrow4},W_4)$ \\
$T_{ 8}$ & $H(W_4,W_3)$ \\
$T_{ 9}$ & $H(S_{2\rightarrow4},W_4,W_3)$ \\
$T_{10}$ & $H(S_{3\rightarrow1},S_{2\rightarrow1},W_4,W_1)$ \\
$T_{11}$ & $H(S_{3\rightarrow4},S_{3\rightarrow1},S_{2\rightarrow1},W_4,W_1)$ \\
$T_{12}$ & $H(S_{4\rightarrow3},M_1)$ \\
$T_{13}$ & $H(S_{4\rightarrow3},M_1,M_2)$ \\
$T_{14}$ & $H(S_{4\rightarrow2},S_{3\rightarrow4},M_1,M_2)$ \\
$T_{15}$ & $H(S_{4\rightarrow2},S_{3\rightarrow2},M_1,M_2)$ \\
$T_{16}$ & $H(S_{3\rightarrow2},W_4,M_1,M_2)$ \\
$T_{17}$ & $H(S_{3\rightarrow4},W_4,M_2,M_2)$ \\
$T_{18}$ & $H(S_{3\rightarrow2},S_{2\rightarrow4},W_4,M_1,M_2)$ \\
$T_{19}$ & $H(S_{3\rightarrow2},S_{2\rightarrow3},W_4,M_1,M_2)$ \\
$T_{20}$ & $H(S_{3\rightarrow2},S_{2\rightarrow4},S_{1\rightarrow4},W_4,M_1,M_2)$ \\
$T_{21}$ & $H(M_1)=B^+_1$ \\
$T_{22}$ & $H(M_1,M_2)=B^+_2$ \\
$T_{23}$ & $H(M_1,M_2,M_3)=B^+_3$ \\
\hline
\end{tabular}
\end{table}

\begin{table}
\centering
\caption{Proof by Tabulation of Proposition \ref{prop:MRR4_5}, with terms defined in Table \ref{table:MRR4_51}.\label{table:MRR4_52}}
 \centering 
        \setlength{\tabcolsep}{3pt}
        \begin{tabular}{|ccccccccccccccccccccccc|}
\hline
$T_{ 1}$  &$T_{ 2}$  &$T_{ 3}$  &$T_{ 4}$  &$T_{ 5}$  &$T_{ 6}$  &$T_{ 7}$  &$T_{ 8}$  &$T_{ 9}$  &$T_{10}$  &$T_{11}$  &$T_{12}$  &$T_{13}$  &$T_{14}$  &$T_{15}$  &$T_{16}$  &$T_{17}$  &$T_{18}$  &$T_{19}$  &$T_{20}$  &$T_{21}$  &$T_{22}$  &$T_{23}$  \\
\hline
$ 16$     &$ -8$     &          &          &          &          &          &          &          &          &          &          &          &          &          &          &          &          &          &          &          &          &           \\
          &          &          &$  4$     &          &$ -4$     &          &          &          &          &          &$  4$     &          &          &          &          &          &          &          &          &$ -4$     &          &           \\
$  4$     &$  4$     &$ -4$     &          &          &          &          &          &          &          &          &          &          &          &          &          &          &          &          &          &          &          &           \\
          &          &          &$  1$     &$  1$     &          &          &$ -1$     &          &          &          &$ -1$     &          &          &          &          &          &          &          &          &          &          &           \\
$  4$     &$  4$     &          &          &          &          &$ -4$     &          &          &          &          &          &          &          &          &          &          &          &          &          &          &          &           \\
          &          &          &$  4$     &          &          &          &          &          &$ -4$     &          &          &          &          &$  4$     &          &          &          &          &          &$ -4$     &          &           \\
          &          &$  4$     &          &          &$  4$     &          &          &          &$ -4$     &          &          &          &          &$ -4$     &          &          &          &          &          &          &          &           \\
          &          &          &$  2$     &          &          &$  2$     &$ -2$     &          &          &          &$ -2$     &          &          &          &          &          &          &          &          &          &          &           \\
          &          &          &          &          &          &          &$  2$     &          &          &          &          &$  2$     &          &          &          &          &          &          &          &          &$ -2$     &$ -2$     \\
          &          &          &          &          &          &          &          &          &          &          &          &$ -1$     &          &          &$  1$     &$  1$     &          &          &          &          &          &$ -1$     \\
          &          &          &$  1$     &          &          &$  1$     &          &$ -1$     &          &          &$ -1$     &          &          &          &          &          &          &          &          &          &          &           \\
          &          &          &          &          &          &          &          &          &          &          &          &          &$ -1$     &          &          &          &          &$  1$     &$  1$     &          &          &$ -1$     \\
          &          &          &          &          &          &          &$  1$     &          &$  1$     &          &          &          &          &          &$ -1$     &          &          &          &          &          &          &$ -1$     \\
          &          &          &          &          &          &          &          &$  1$     &$  1$     &          &          &          &          &          &          &          &$ -1$     &          &          &          &          &$ -1$     \\
          &          &          &          &          &          &          &          &          &$  1$     &$  1$     &          &          &          &          &          &          &          &          &$ -1$     &          &          &$ -1$     \\
          &          &          &          &          &          &          &          &          &$  1$     &          &          &$ -1$     &$  1$     &          &          &          &          &          &          &          &          &$ -1$     \\
          &          &          &          &          &          &          &          &          &$  1$     &          &          &          &          &          &          &$ -1$     &$  1$     &          &          &          &          &$ -1$     \\
          &          &          &          &          &          &          &          &          &$  2$     &          &          &          &          &          &          &          &          &$ -1$     &          &          &          &$ -1$     \\
          &          &          &          &$ -1$     &          &$  1$     &          &          &$  1$     &$ -1$     &          &          &          &          &          &          &          &          &          &          &          &           \\
\hline\hline
$24$&&&$12$&&&&&&&&&&&&&&&&&$-8$&$-2$&$-10$\\
\hline
\end{tabular}
\end{table}

\begin{prop}
For any $(\bar{\alpha},\bar{\beta})\in \mathcal{R}_4(\bar{B}_1,\bar{B}_2,\bar{B}_3)$, we have \label{prop:MRR4_6}
\begin{align}
30\bar{\beta}\geq 4\bar{B}^+_1+\bar{B}^+_2+5\bar{B}^+_3=30\left(\frac{1}{3}\bar{B}_1+\frac{1}{5}\bar{B}_2+\frac{1}{6}\bar{B}_3\right).
\end{align}
\end{prop}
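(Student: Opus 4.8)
The plan is to prove the bound $30\bar\beta \geq 4\bar B^+_1 + \bar B^+_2 + 5\bar B^+_3$ by the same tabulation method used in Propositions \ref{prop:MRR4_3}--\ref{prop:MRR4_5}, exhibiting it as a nonnegative combination of Shannon-type (sub-modular) inequalities plus the reconstruction and repair constraints, all exploited under the node-symmetry of the optimal solution. Concretely, I would present two tables: the first listing the joint-entropy terms $T_i$ appearing in the derivation, built out of the repair symbols $S_{i\to j}$, the node contents $W_i$, and the conditional versions obtained by appending $M^+_1, M^+_2$; the second giving the coefficients, with the last row equal to the desired inequality. The target is a pure $\bar\beta$ bound, so only quantities of the type $H(S_{i\to j}, \ldots)$ should carry positive weight on the left after cancellation, and every $H(W_i\mid\cdot)$ introduced in intermediate steps must be cancelled out.

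First I would set up the basic ingredients. Since $\bar\beta \geq H(S_{2\to1})$ (normalized), the left side $30\bar\beta$ is bounded below by $30 H(S_{2\to1})$, and by symmetry this equals, say, $10[H(S_{2\to1})+H(S_{3\to1})+H(S_{4\to1})]$ or various regroupings; sub-modularity then lets me split this into terms like $H(S_{4\to3})$, $H(S_{4\to2},S_{3\to2})$, $H(S_{4\to1},S_{3\to1},S_{2\to1})$, exactly mirroring the opening rows of the earlier tables. The core mechanism, as in Lemma \ref{lemma:msp} and the $n=4$ propositions, is: (i) use the repair condition \eqref{eqn:repairrequirement} to absorb a node $W_j$ into a bundle of incoming helper symbols $S_{i\to j}$ at no entropy cost (an equality), (ii) use the reconstruction condition \eqref{eqn:reconstructionzeroerror} to peel off a message rate $\bar B_k$ when enough $W$'s are present, conditioned appropriately, and (iii) use sub-modularity to recombine. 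I would chain these so that the $H(W\mid\cdot)$ terms telescope: each such term is created with one sign and destroyed with the opposite sign in a later row, exactly as the coefficient columns for $T_4$ through $T_{15}$ (roughly) behave in Table \ref{table:MRR4_52}. The sum-rate coefficients $4, 1, 5$ on $\bar B^+_1, \bar B^+_2, \bar B^+_3$ should emerge from how many times each reconstruction step is invoked at each level, weighted by the normalization.

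The main obstacle I anticipate is finding the exact certificate — i.e., the precise list of entropy terms and the precise nonnegative coefficients so that everything telescopes and the residual is exactly $30\bar\beta - 4\bar B^+_1 - \bar B^+_2 - 5\bar B^+_3 \geq 0$. This is a linear-programming feasibility problem over the (symmetrized) Shannon cone together with the problem-specific equality constraints, and it is not something one guesses by hand; as the authors note, this is why they use the computational approach of \cite{Tian:JSAC13}. So in practice the "proof" is simply the verification that the two displayed tables are valid: each row is a genuine sub-modular inequality (possibly an integer multiple, possibly after an index permutation justified by symmetry) or a genuine consequence of \eqref{eqn:reconstructionzeroerror}/\eqref{eqn:repairrequirement}, and the columnwise sums reproduce the claimed coefficients. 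A secondary subtlety is bookkeeping around the conditioning on $M^+_1$ versus $M^+_2$ and making sure that when a reconstruction step is applied, the conditioning genuinely permits recovering the next message (the right subset of $W$'s must be present), and that the $\bar B^+_k$ versus $\bar B_k$ normalization ($\bar B^+_3 = 1$ here) is handled consistently. I would therefore structure the write-up as: \emph{Proof. See Table [terms] and Table [tabulation].} followed by the two tables, exactly in the style of the preceding three propositions, and leave the line-by-line checking to the reader as the authors do elsewhere.
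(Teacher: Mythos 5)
You have correctly identified the proof strategy — the paper's proof of this proposition is precisely a tabulation of Shannon-type inequalities (together with repair/reconstruction consequences and node symmetry) whose column sums give the target inequality, exactly as in Propositions \ref{prop:MRR4_3}--\ref{prop:MRR4_5}. Your description of the three basic moves (absorb $W_j$ via the repair equality, peel off $B_k$ via reconstruction, recombine via sub-modularity) and of the telescoping requirement on the $H(W\mid\cdot)$ terms matches what the paper's Table \ref{table:MRR4_62} actually does. In that sense there is no methodological disagreement.

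However, what you have written is a plan for a proof, not a proof. The entire mathematical content of this proposition lies in exhibiting the specific certificate: in the paper that is the list of seventeen entropy terms $T_1,\dots,T_{17}$ in Table \ref{table:MRR4_61} (including unusual ones like $H(S_{4\to2},S_{4\to1},S_{3\to2},S_{3\to1},W_2,W_1)$ and the two- and three-symbol helper bundles conditioned on $M^+_1$ and $M^+_2$) and the fourteen rows of integer coefficients in Table \ref{table:MRR4_62} that sum to $30\bar\beta - 4\bar B^+_1 - \bar B^+_2 - 5\bar B^+_3$. You explicitly decline to produce these, saying the certificate ``is not something one guesses by hand'' and is found by the LP machinery of \cite{Tian:JSAC13}. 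That is an honest assessment, but it means the proof is not actually carried out: there is nothing a reader could verify. Writing ``\emph{Proof.} See Table [terms] and Table [tabulation]'' and then not supplying the tables leaves the claim unsupported. To close the gap you would need to either run the LP and report a valid certificate (e.g., the one in the paper), or find an explicit chain of inequalities by hand; a description of what such a chain would look like does not establish that one exists with the claimed constants $4,1,5$.
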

\begin{proof}
See Tables \ref{table:MRR4_61} and \ref{table:MRR4_62}.
\end{proof}
\begin{table}
\centering
\caption{Terms needed to prove Proposition \ref{prop:MRR4_6}\label{table:MRR4_61}.}
\begin{tabular}{|c|c|}
\hline
$T_{ 1}$ & $H(S_{4\rightarrow3})$ \\
$T_{ 2}$ & $H(S_{4\rightarrow2},S_{3\rightarrow2})$ \\
$T_{ 3}$ & $H(S_{4\rightarrow1},S_{3\rightarrow1},S_{2\rightarrow1})$ \\
$T_{ 4}$ & $H(W_4)$ \\
$T_{ 5}$ & $H(S_{3\rightarrow2},W_4)$ \\
$T_{ 6}$ & $H(S_{3\rightarrow1},S_{2\rightarrow1},W_4,W_1)$ \\
$T_{ 7}$ & $H(S_{4\rightarrow2},S_{4\rightarrow1},S_{3\rightarrow2},S_{3\rightarrow1},W_2,W_1)$ \\
$T_{ 8}$ & $H(S_{4\rightarrow3},M_1)$ \\
$T_{ 9}$ & $H(S_{4\rightarrow3},M_1,M_2)$ \\
$T_{10}$ & $H(S_{4\rightarrow2},S_{3\rightarrow2},M_1)$ \\
$T_{11}$ & $H(S_{4\rightarrow3},S_{3\rightarrow4},M_1)$ \\
$T_{12}$ & $H(S_{4\rightarrow2},S_{3\rightarrow4},M_1,M_2)$ \\
$T_{13}$ & $H(S_{4\rightarrow1},S_{3\rightarrow4},S_{2\rightarrow4},M_1,M_2)$ \\
$T_{14}$ & $H(S_{4\rightarrow1},S_{3\rightarrow4},S_{3\rightarrow2},S_{2\rightarrow4},M_1,M_2)$ \\
$T_{15}$ & $H(M_1)=B^+_1$ \\
$T_{16}$ & $H(M_1,M_2)=B^+_2$ \\
$T_{17}$ & $H(M_1,M_2,M_3)=B^+_3$ \\
\hline
\end{tabular}
\end{table}
\begin{table}
\centering
\caption{Proof by Tabulation of Proposition \ref{prop:MRR4_6}, with terms defined in Table \ref{table:MRR4_61}.\label{table:MRR4_62}}
\setlength{\tabcolsep}{3pt}
\begin{tabular}{|ccccccccccccccccc|}
\hline
$T_{ 1}$  &$T_{ 2}$  &$T_{ 3}$  &$T_{ 4}$  &$T_{ 5}$  &$T_{ 6}$  &$T_{ 7}$  &$T_{ 8}$  &$T_{ 9}$  &$T_{10}$  &$T_{11}$  &$T_{12}$  &$T_{13}$  &$T_{14}$  &$T_{15}$  &$T_{16}$  &$T_{17}$ 
\\
\hline
$ 20$     &$-10$     &          &          &          &          &          &          &          &          &          &          &          &          &          &          &           \\
$  5$     &$  5$     &$ -5$     &          &          &          &          &          &          &          &          &          &          &          &          &          &           \\
$  5$     &$  5$     &          &$ -5$     &          &          &          &          &          &          &          &          &          &          &          &          &           \\
          &          &          &$  1$     &$ -1$     &          &          &$  1$     &          &          &          &          &          &          &$ -1$     &          &           \\
          &          &$  2$     &          &$  2$     &$ -2$     &          &          &          &$ -2$     &          &          &          &          &          &          &           \\
          &          &          &$  2$     &          &$ -2$     &          &          &          &$  2$     &          &          &          &          &$ -2$     &          &           \\
          &          &          &          &          &$  1$     &          &          &$ -1$     &          &          &$  1$     &          &          &          &          &$ -1$     \\
          &          &          &$  1$     &$ -1$     &          &          &          &          &          &$  1$     &          &          &          &$ -1$     &          &           \\
          &          &$  1$     &$  1$     &          &$ -1$     &          &$ -1$     &          &          &          &          &          &          &          &          &           \\
          &          &          &          &          &$  1$     &$  1$     &          &          &          &          &          &          &$ -1$     &          &          &$ -1$     \\
          &          &          &          &          &$  1$     &          &          &          &          &          &          &$ -1$     &$  1$     &          &          &$ -1$     \\
          &          &$  2$     &          &          &          &$ -1$     &          &          &          &$ -1$     &          &          &          &          &          &           \\
          &          &          &          &          &$  1$     &          &          &          &          &          &$ -1$     &$  1$     &          &          &          &$ -1$     \\
          &          &          &          &          &$  1$     &          &          &$  1$     &          &          &          &          &          &          &$ -1$     &$ -1$     \\
          \hline\hline
$30$ &&&&&&&&&&&&&&$-4$&$-1$&$-5$         \\
\hline
\end{tabular}
\end{table}

\section{Concluding remarks}
\label{sec:conclusion}

We considered the problem of multilevel diversity coding with regeneration, which addresses the storage vs. repair-bandwidth tradeoff in distributed storage systems with heterogeneous reliability and access latency requirements. It was shown that for the minimum storage point on the optimal tradeoff curve, separate coding is sufficient, and there is no need to mix different contents. On the other hand, a complete characterization of the tradeoff region was provided for the case of four nodes, which reveals that mixing in general can strictly improve the overall tradeoff. 

Although we focused on the case $d=n-1$, some of the results can be generalized to $d<n-1$ straightforwardly, by recognizing that any MLD-R sytem with $d<n-1$ includes an MLD-R sub-system with $d=n\rq{}-1$. Particularly, the optimality of separate coding at the MSR point holds for $d<n-1$ as well. It is also worth mentioning that in a recent work \cite{Shao:16}, separate coding was shown to be also optimal at the MBR point, and thus the benefit of mixing only manifests in the intermediate tradeoff regime.


A notable feature of this work is that we further developed the computational approach in \cite{Tian:JSAC13} to identify and prove the converse theorems. As a result,  the converse proof was presented as tabulation without being translated into conventional form of proofs that are usually seen in information theory literature. It is our belief that this computational approach will  be able to play an even more significant role in future studies. To share our data with the research community, we have posted the computational results presented in this paper as part of the online collection of ``Solutions of Computed Information Theoretic Limits (SCITL)'' hosted at \cite{TianWebpage}, which we hope in the future can serve as a data depot for information-theoretic limits obtained through computational approaches. We are currently working toward extending the results obtained so far to more general parameters.  

There are several immediate research directions to follow. First, the code construction given for $n=4$ can be generalized to other parameters in a relatively straightforward manner, and we shall address this issue in a forthcoming work. Second, it is important to understand in general by mixing the contents how much improvement can be attained over separate coding. Finally, it may also be useful to consider the analogous requirement in the locally repairable code setting \cite{Huang:12,Papailiopoulos:ISIT12,Rawat:14}. 

\section*{Appendix: Proofs of Lemma \ref{prop:case3}, Lemma \ref{prop:case4}, Theorem \ref{theorem:case3} and Corollary \ref{coro:different}}
\begin{proof}[Proof of Lemma \ref{prop:case3}]
It is known that the normalized tradeoff region for the $(3,1,2)$ regenerating codes is given by the set of $(\bar{\alpha}_1,\bar{\beta}_1)$ pairs satisfying:
\begin{align}
\bar{\alpha}_1 \geq 1 \quad \mbox{and} \quad 2\bar{\beta}_1\geq 1
\end{align}
and the normalized tradeoff region for the $(3,2,2)$ regenerating codes is given by the set of $(\bar{\alpha}_2,\bar{\beta}_2)$ pairs satisfying:
\begin{align}
2\bar{\alpha}_2 \geq 1,\quad \bar{\alpha}_2+\bar{\beta}_2\geq 1,\quad \mbox{and} \;\; 3\bar{\beta}_2\geq 1.
\end{align}
Using Fourier-Motzkin elimination, it is straightforward to verify that the separate-coding normalized tradeoff region $\hat{\mathcal{R}}_3(\bar{B}_1,\bar{B}_2)=\{(\bar{\alpha}_1\bar{B}_1+\bar{\alpha}_2\bar{B}_2,\bar{\beta}_1\bar{B}_1+\bar{\beta}_2\bar{B}_2)\}$ is indeed given by Lemma~\ref{prop:case3}.
\end{proof}

\begin{proof}[Proof of Lemma \ref{prop:case4}]
It is known that the normalized tradeoff region for the $(4,1,3)$ regenerating codes is given by the set of $(\bar{\alpha}_1,\bar{\beta}_1)$ pairs satisfying:
\begin{align}
\bar{\alpha}_1\geq 1 \quad \mbox{and} \quad 3\bar{\beta}_1\geq 1 \label{eqn:413}
\end{align}
the normalized tradeoff region for the $(4,2,3)$ regenerating codes is given by the set of $(\bar{\alpha}_2,\bar{\beta}_2)$ pairs satisfying \cite{Dimakis:10}:
\begin{align}
2\bar{\alpha}_2\geq 1,\quad \bar{\alpha}_2+2\bar{\beta}_2\geq 1,\quad \mbox{and} \;\; 5\bar{\beta}_2\geq 1\label{eqn:423}
\end{align}
and the normalized tradeoff region for the $(4,3,3)$ regenerating codes is given by the set of $(\bar{\alpha}_3,\bar{\beta}_3)$ pairs satisfying \cite{Tian:JSAC13}:
\begin{align}
3\bar{\alpha}_3\geq 1, \quad 2\bar{\alpha}_3+\bar{\beta}_3\geq 1, \quad
4\bar{\alpha}_3+6\bar{\beta}_3\geq 3, \quad \mbox{and} \;\; 6\bar{\beta}_3\geq 1. \label{eqn:433}
\end{align}
However, unlike for $n=3$, using Fourier-Motzkin elimination to directly obtain a polyhedral description of $\hat{\mathcal{R}}_4(\bar{B}_1,\bar{B}_2,\bar{B}_3)$ is simply too time-consuming. Instead, denoting the set of $(\bar{\alpha},\bar{\beta})$ pairs constrained by the inequalities (\ref{eqn:case4_1})--(\ref{eqn:case4_5}) as $\tilde{\mathcal{R}}_4(\bar{B}_1,\bar{B}_2,\bar{B}_3)$, we shall show $\hat{\mathcal{R}}_4(\bar{B}_1,\bar{B}_2,\bar{B}_3) \subseteq \tilde{\mathcal{R}}_4(\bar{B}_1,\bar{B}_2,\bar{B}_3)$ and $\tilde{\mathcal{R}}_4(\bar{B}_1,\bar{B}_2,\bar{B}_3) \subseteq \hat{\mathcal{R}}_4(\bar{B}_1,\bar{B}_2,\bar{B}_3)$ separately.

To show that $\hat{\mathcal{R}}_4(\bar{B}_1,\bar{B}_2,\bar{B}_3) \subseteq \tilde{\mathcal{R}}_4(\bar{B}_1,\bar{B}_2,\bar{B}_3)$, we need to show that any $(\bar{\alpha},\bar{\beta})$ pair in $\hat{\mathcal{R}}_4(\bar{B}_1,\bar{B}_2,\bar{B}_3)$ must satisfy the inequalities (\ref{eqn:case4_1})--(\ref{eqn:case4_5}). Consider the inequality \eqref{eqn:case4_3} for example. For any $(\bar{\alpha},\bar{\beta}) \in \hat{\mathcal{R}}_4(\bar{B}_1,\bar{B}_2,\bar{B}_3)$, we have
\begin{align*}
4\bar{\alpha}+6\bar{\beta}
&=4(\bar{\alpha}_1\bar{B}_1+\bar{\alpha}_2\bar{B}_2+\bar{\alpha}_3\bar{B}_3)+6(\bar{\beta}_1\bar{B}_1+\bar{\beta}_2\bar{B}_2+\bar{\beta}_3\bar{B}_3)\\
&=(4\bar{\alpha}_1+6\bar{\beta}_1)\bar{B}_1+\left[\bar{\alpha}_2+3\left(\bar{\alpha}_2+2\bar{\beta}_2\right)\right]\bar{B}_2+(4\bar{\alpha}_3+6\bar{\beta}_3)\bar{B}_3\\
&\geq(4+2)\bar{B}_1+\left(\frac{1}{2}+3\right)\bar{B}_2+3\bar{B}_3\\
&=6\bar{B}_1+\frac{7}{2}\bar{B}_2+3\bar{B}_3,
\end{align*}
where the inequality above follows directly from the inequalities from (\ref{eqn:413})--(\ref{eqn:433}). The other four inequalities can be proved similarly; the details are omitted here. We thus conclude that $\hat{\mathcal{R}}_4(\bar{B}_1,\bar{B}_2,\bar{B}_3) \subseteq \tilde{\mathcal{R}}_4(\bar{B}_1,\bar{B}_2,\bar{B}_3)$.

To show that $\tilde{\mathcal{R}}_4(\bar{B}_1,\bar{B}_2,\bar{B}_3) \subseteq \hat{\mathcal{R}}_4(\bar{B}_1,\bar{B}_2,\bar{B}_3)$, first note that the characteristic cone of $\tilde{\mathcal{R}}_4(\bar{B}_1,\bar{B}_2,\bar{B}_3)$ is given by
$\{(\bar{\alpha},\bar{\beta}): \bar{\alpha} \geq 0,\; \bar{\beta}\geq0\}$. By the definition of $\hat{\mathcal{R}}_4(\bar{B}_1,\bar{B}_2,\bar{B}_3)$, any ray of $\tilde{\mathcal{R}}_4(\bar{B}_1,\bar{B}_2,\bar{B}_3)$ is also a ray of $\hat{\mathcal{R}}_4(\bar{B}_1,\bar{B}_2,\bar{B}_3)$. To examine the extreme points of $\tilde{\mathcal{R}}_4(\bar{B}_1,\bar{B}_2,\bar{B}_3)$, we can compute the intersections between any two inequalities (taken as equalities) from (\ref{eqn:case4_1})--(\ref{eqn:case4_5}). This yields a total of ten points, which are the possible extreme points of $\tilde{\mathcal{R}}_4(\bar{B}_1,\bar{B}_2,\bar{B}_3)$. However, some of them do not satisfy all the inequalities\footnote{More precisely, such a point violates certain inequalities unless certain components in $(\bar{B}_1,\bar{B}_2,\bar{B}_3)$ are zeros; however, under these degenerate conditions, it reduces to one of the points given in (\ref{eqn:point4a})--(\ref{eqn:point4d}).}, and after eliminating them, the possible extreme points of $\tilde{\mathcal{R}}_4(\bar{B}_1,\bar{B}_2,\bar{B}_3)$ are given by:
\begin{align}
\left(\bar{B}_1+\frac{\bar{B}_2}{2}+\frac{\bar{B}_3}{3},\frac{\bar{B}_1}{3}+\frac{\bar{B}_2}{4}+\frac{\bar{B}_3}{3}\right),\label{eqn:point4a}\\
\left(\bar{B}_1+\frac{\bar{B}_2}{2}+\frac{3\bar{B}_3}{8},\frac{\bar{B}_1}{3}+\frac{\bar{B}_2}{4}+\frac{\bar{B}_3}{4}\right),\label{eqn:point4b}\\
\left(\bar{B}_1+\frac{\bar{B}_2}{2}+\frac{\bar{B}_3}{2},\frac{\bar{B}_1}{3}+\frac{\bar{B}_2}{4}+\frac{\bar{B}_3}{6}\right),\label{eqn:point4c}\\
\mbox{and} \quad \left(\bar{B}_1+\frac{3\bar{B}_2}{5}+\frac{\bar{B}_3}{2},\frac{\bar{B}_1}{3}+\frac{\bar{B}_2}{5}+\frac{\bar{B}_3}{6}\right).\label{eqn:point4d}
\end{align}
Note from \eqref{eqn:413}--\eqref{eqn:433} that the extreme points of the normalized tradeoff rate regions for the $(4,1,3)$, $(4,2,3)$ and $(4,3,3)$ regenerating codes are given by:
\begin{align*}
(\bar{\alpha}_1,\bar{\beta}_1)&=\left(1,\frac{1}{3}\right),\\
(\bar{\alpha}_2,\bar{\beta}_2)&=\left(\frac{1}{2},\frac{1}{4}\right),\left(\frac{3}{5},\frac{1}{5}\right),\\
\mbox{and} \quad (\bar{\alpha}_3,\bar{\beta}_3)&=\left(\frac{1}{3},\frac{1}{3}\right),\left(\frac{3}{8},\frac{1}{4}\right),
\left(\frac{1}{2},\frac{1}{6}\right).
\end{align*}
Therefore,
\begin{itemize}
\item point \eqref{eqn:point4a} can be achieved by separate coding that uses $(4,1,3)$, $(4,2,3)$ and $(4,3,3)$ regenerating codes operating at normalized rate pairs $(\bar{\alpha}_1,\bar{\beta}_1)=\left(1,\frac{1}{3}\right)$, $(\bar{\alpha}_2,\bar{\beta}_2)=\left(\frac{1}{2},\frac{1}{4}\right)$ and $(\bar{\alpha}_3,\bar{\beta}_3)=\left(\frac{1}{3},\frac{1}{3}\right)$, respectively;
\item point \eqref{eqn:point4b} can be achieved by separate coding that uses $(4,1,3)$, $(4,2,3)$ and $(4,3,3)$ regenerating codes operating at normalized rate pairs $(\bar{\alpha}_1,\bar{\beta}_1)=\left(1,\frac{1}{3}\right)$, $(\bar{\alpha}_2,\bar{\beta}_2)=\left(\frac{1}{2},\frac{1}{4}\right)$ and $(\bar{\alpha}_3,\bar{\beta}_3)=\left(\frac{3}{8},\frac{1}{4}\right)$, respectively;
\item point \eqref{eqn:point4c} can be achieved by separate coding that uses $(4,1,3)$, $(4,2,3)$ and $(4,3,3)$ regenerating codes operating at normalized rate pairs $(\bar{\alpha}_1,\bar{\beta}_1)=\left(1,\frac{1}{3}\right)$, $(\bar{\alpha}_2,\bar{\beta}_2)=\left(\frac{1}{2},\frac{1}{4}\right)$ and $(\bar{\alpha}_3,\bar{\beta}_3)=\left(\frac{1}{2},\frac{1}{6}\right)$, respectively; and
\item point \eqref{eqn:point4d} can be achieved by separate coding that uses $(4,1,3)$, $(4,2,3)$ and $(4,3,3)$ regenerating codes operating at normalized rate pairs $(\bar{\alpha}_1,\bar{\beta}_1)=\left(1,\frac{1}{3}\right)$, $(\bar{\alpha}_2,\bar{\beta}_2)=\left(\frac{3}{5},\frac{1}{5}\right)$ and $(\bar{\alpha}_3,\bar{\beta}_3)=\left(\frac{1}{2},\frac{1}{6}\right)$, respectively.
\end{itemize}
We thus conclude that $\tilde{\mathcal{R}}_4(\bar{B}_1,\bar{B}_2,\bar{B}_3) \subseteq \hat{\mathcal{R}}_4(\bar{B}_1,\bar{B}_2,\bar{B}_3)$, completing the proof of Lemma~\ref{prop:case4}.
\end{proof}

\begin{proof}[Converse Proof of Theorem \ref{theorem:case3}]
To establish the converse of Theorem \ref{theorem:case3}, we shall prove that every normalized rate pair $(\bar{\alpha},\bar{\beta}) \in \mathcal{R}_3(\bar{B}_1,\bar{B}_2)$ must satisfy the inequalities from \eqref{eq:case3}. The inequality $\bar{\alpha} \geq \bar{B}_1+\frac{\bar{B}_2}{2}$ holds even without the regeneration requirement\cite{RocheYeungHau:97}, and the inequality $\bar{\alpha}+\bar{\beta} \geq \frac{3\bar{B}_1}{2}+\bar{B}_2$ follows directly from Theorem \ref{theorem:msp} by setting $n=3$. It remains to prove that the inequality $\bar{\beta} \geq \frac{\bar{B}_1}{2}+\frac{\bar{B}_2}{3}$ is true, which can be shown as follows.

First note that the repair bandwidth $\beta$ can be bounded from below as follows:
\begin{align}
\beta&\geq \frac{1}{2}[H(S_{1\rightarrow3})+H(S_{2\rightarrow3})]\nonumber\\
&\geq \frac{1}{2}H(S_{1\rightarrow3},S_{2\rightarrow3})\nonumber\\
&\stackrel{(a)}{=}\frac{1}{2}H(S_{1\rightarrow3},S_{2\rightarrow3},W_3,M_1)\nonumber\\
&\geq \frac{1}{2}B_1+\frac{1}{2}H(S_{1\rightarrow3},S_{2\rightarrow3},W_3|M_1)\label{eq:tpf1}
\end{align}
where $(a)$ is due to the fact that the data stored at node three $W_3$ can be regenerated from the helper messages $S_{1\rightarrow3}$ and $S_{2\rightarrow3}$. To proceed, we can further bound the second term on the right-hand side of \eqref{eq:tpf1} as follows:
\begin{align}
&H(S_{1\rightarrow3},S_{2\rightarrow3},W_3|M_1)\nonumber\\
&\stackrel{(a)}{=}H(S_{1\rightarrow3},S_{2\rightarrow3},W_3,S_{3\rightarrow1},S_{3\rightarrow2}|M_1)\nonumber\\
&\geq H(S_{1\rightarrow3},S_{2\rightarrow3},S_{3\rightarrow1},S_{3\rightarrow2}|M_1)\nonumber\\
&\stackrel{(s)}{=}\frac{1}{3}[H(S_{1\rightarrow3},S_{2\rightarrow3},S_{3\rightarrow1},S_{3\rightarrow2}|M_1)
+H(S_{1\rightarrow2},S_{3\rightarrow2},S_{2\rightarrow1},S_{2\rightarrow3}|M_1)\nonumber\\
&\qquad+H(S_{3\rightarrow1},S_{2\rightarrow1},S_{1\rightarrow2},S_{1\rightarrow3}|M_1)]\nonumber\\
&\stackrel{(b)}{\geq} \frac{1}{3}[H(S_{1\rightarrow3},S_{2\rightarrow3},S_{3\rightarrow1},S_{3\rightarrow2},S_{1\rightarrow2},S_{2\rightarrow1}|M_1)\nonumber\\
&\qquad+H(S_{3\rightarrow2},S_{2\rightarrow3}|M_1)+H(S_{3\rightarrow1},S_{2\rightarrow1},S_{1\rightarrow2},S_{1\rightarrow3}|M_1)]\nonumber\\
&\stackrel{(c)}{\geq}\frac{1}{3}[B_2+H(S_{3\rightarrow2},S_{2\rightarrow3}|M_1)+H(S_{3\rightarrow1},S_{2\rightarrow1},S_{1\rightarrow2},S_{1\rightarrow3}|M_1)]\nonumber\\
&\geq \frac{1}{3}[B_2+H(S_{3\rightarrow2},S_{2\rightarrow3},S_{3\rightarrow1},S_{2\rightarrow1},S_{1\rightarrow2},S_{1\rightarrow3}|M_1)]\nonumber\\
&\geq \frac{1}{3}[B_2+H(M_2|M_1)]\nonumber\\
&= \frac{2B_2}{3},\label{eq:tpf2}
\end{align}
where $(a)$ is due to the fact that the helper messages $S_{3\rightarrow1},S_{3\rightarrow2}$ are functions of $W_3$, $(b)$ follows from the submodularity of entropy function, and $(c)$ is because from $(S_{1\rightarrow3},S_{2\rightarrow3},S_{3\rightarrow1},S_{2\rightarrow1})$ we can regenerate $(W_1,W_3)$ and subsequently decode $M_2$. Substituting \eqref{eq:tpf2} into \eqref{eq:tpf1} gives
\begin{align}
\beta\geq \frac{1}{2}B_1+\frac{1}{3}B_2.
\end{align}
Normalizing both sides by $B_1+B_2$ completes the proof of $\bar{\beta} \geq \frac{\bar{B}_1}{2}+\frac{\bar{B}_2}{3}$ and hence the converse theorem.
\end{proof}

\begin{proof}[Proof of Corollary \ref{coro:different}]

Let us first show that when $\bar{B}_2\bar{B}_3=0$, we have $\mathcal{R}_4(\bar{B}_1,\bar{B}_2,\bar{B}_3)=\hat{\mathcal{R}}_4(\bar{B}_1,\bar{B}_2,\bar{B}_3)$. Since we have $\hat{\mathcal{R}}_4(\bar{B}_1,\bar{B}_2,\bar{B}_3) \subseteq \mathcal{R}_4(\bar{B}_1,\bar{B}_2,\bar{B}_3)$ {\em a priori}, we only need to show that $\mathcal{R}_4(\bar{B}_1,\bar{B}_2,\bar{B}_3) \subseteq \hat{\mathcal{R}}_4(\bar{B}_1,\bar{B}_2,\bar{B}_3)$. Further note that the inequality (\ref{eqn:case4_3}) is the only one from the set of inequalities (\ref{eqn:case4_1})--(\ref{eqn:case4_5}) that is not shared by the inequalities from the set of inequalities (\ref{eqn:rateregion4_1})--(\ref{eqn:rateregion4_6}), so we only need to show that any normalized rate pair $(\bar{\alpha},\bar{\beta}) \in \mathcal{R}_4(\bar{B}_1,\bar{B}_2,\bar{B}_3)$ must satisfy the inequality (\ref{eqn:case4_3}) when $\bar{B}_2\bar{B}_3=0$.

Note that when $\bar{B}_2=0$, (\ref{eqn:case4_3}) follows directly from (\ref{eqn:rateregion4_4}). On the other hand, when $\bar{B}_3=0$, from (\ref{eqn:rateregion4_2}) and (\ref{eqn:rateregion4_5}) we have
\begin{align} 
4\bar{\alpha}+6\bar{\beta}=\frac{2}{3}(2\bar{\alpha}+\bar{\beta})+\frac{8}{3}(\bar{\alpha}+2\bar{\beta})\geq 6\bar{B}_1+\frac{7}{2}\bar{B}_2
\end{align}
which is (\ref{eqn:case4_3}) when $\bar{B}_3=0$. This proves the ``if" part of the corollary.

To prove the ``only if" part, we shall assume that $\mathcal{R}_4(\bar{B}_1,\bar{B}_2,\bar{B}_3)=\hat{\mathcal{R}}_4(\bar{B}_1,\bar{B}_2,\bar{B}_3)$ and $\bar{B}_2\neq 0$. Note that when $\bar{B}_2\neq 0$, the inequality (\ref{eqn:case4_3}) does not follow directly from the inequality (\ref{eqn:rateregion4_4}). Since these two inequalities are ``parallel", so neither can be active within their respective groups of inequalities. Now consider the normalized rate pair
\begin{align*}
(\bar{\alpha},\bar{\beta}) = \left(\bar{B}_1+\frac{\bar{B}_2}{2}+\frac{7\bar{B}_3}{18},\frac{\bar{B}_1}{3}+\frac{\bar{B}_2}{4}+\frac{2\bar{B}_3}{9}\right).
\end{align*}
It is straightforward to verify that the above point satisfies the inequalities (\ref{eqn:case4_1}), (\ref{eqn:case4_2}), (\ref{eqn:case4_4}) and (\ref{eqn:case4_5}). Since the inequality (\ref{eqn:case4_3}) must be inactive within its group, the above point must satisfy the inequality (\ref{eqn:case4_3}) as well, which immediately implies that $\bar{B}_3=0$. We thus conclude that when $\mathcal{R}_4(\bar{B}_1,\bar{B}_2,\bar{B}_3)=\hat{\mathcal{R}}_4(\bar{B}_1,\bar{B}_2,\bar{B}_3)$, we must have $\bar{B}_2\bar{B}_3=0$. This completes the proof of the ``only if" part of the corollary.
\end{proof}

\section*{Acknowledgment}

The authors wish to thank one of the reviewers for pointing out an inaccurate statement in Corollary \ref{coro:different} in an earlier version of this paper. 

\bibliographystyle{IEEEbib}

\end{document}